\documentclass[draftclsnofoot, 12pt,onecolumn]{IEEEtran}
\IEEEoverridecommandlockouts
\usepackage{caption}
\usepackage{subcaption}
\usepackage{cite}
\usepackage{url}
\usepackage{amsmath,amssymb,amsfonts}
\usepackage{amsthm}
{
	\newtheorem{prob}{Problem}
	\newtheorem{corollary}{Corollary}
	
	\newtheorem{definition}{Definition}
	\newtheorem{lemma}{Lemma}
	\newtheorem{theorem}{Theorem}
	
}
\usepackage[T1]{fontenc}
\usepackage[english]{babel}
\usepackage{algorithm,algorithmicx,algpseudocode}
\usepackage{graphicx}
\usepackage{textcomp}
\usepackage{xcolor}
\def\BibTeX{{\rm B\kern-.05em{\sc i\kern-.025em b}\kern-.08em
    T\kern-.1667em\lower.7ex\hbox{E}\kern-.125emX}}
\begin{document}
\pagestyle{empty}
%
\title{Optimizing Information Freshness via Multiuser Scheduling with Adaptive NOMA/OMA
}
\author{Qian Wang, He Chen, Changhong Zhao, Yonghui Li, Petar Popovski and Branka Vucetic
	\thanks{ The work of H. Chen is supported by the CUHK direct grant under the project code 4055126. Part of the paper was presented on IEEE ISIT 2020 \cite{qian2020minimizing}.}
	\thanks{Q.Wang is with  School of Electrical
		and Information Engineering, The University of Sydney, Sydney, NSW
		2006, Australia and Department of Information Engineering, The Chinese University of Hong Kong, Hong Kong SAR, China. The work is done when she is a visiting student at CUHK (email:qian.wang2@sydney.edu.au).}%
	\thanks{H. Chen and C. Zhao are with Department of Information Engineering, The Chinese University of Hong Kong, Hong Kong SAR, China (email: \{he.chen, chzhao\}@ie.cuhk.edu.hk).}%
	\thanks{Y. Li and B. Vucetic are with School of Electrical
		and Information Engineering, The University of Sydney, Sydney, NSW
		2006, Australia (email: \{yonghui.li, branka.vucetic\}@sydney.edu.au).}
	\thanks{P. Popovski is with the Department of Electronic Systems, Faculty of Engineering and Science, APNet Section, Aalborg University, 9220 Aalborg, Denmark (email: petarp@es.aau.dk)}
	}

\maketitle
\thispagestyle{empty}
\begin{abstract}
This paper considers a wireless network with a base station (BS) conducting timely status updates to multiple clients via adaptive non-orthogonal multiple access (NOMA)/orthogonal multiple access (OMA). Specifically, the BS is able to adaptively switch between NOMA and OMA for the downlink transmission to optimize the information freshness of the network, characterized by the Age of Information (AoI) metric. If the BS chooses OMA, it can only serve one client within each time slot and should decide which client to serve; if the BS chooses NOMA, it can serve more than one client at the same time and needs to decide the power allocated to the served clients. For the simple two-client case, we formulate a Markov Decision Process (MDP) problem and develop the optimal policy for the BS to decide whether to use NOMA or OMA for each downlink transmission based on the instantaneous AoI of both clients. The optimal policy is shown to have a switching-type property with obvious decision switching boundaries. A near-optimal policy with lower computation complexity is also devised. For the more general multi-client scenario, inspired by the proposed near-optimal policy, we formulate a nonlinear optimization problem to determine the optimal power allocated to each client by maximizing the expected AoI drop of the network in each time slot. We resolve the formulated problem by approximating it as a convex optimization problem. We also derive the upper bound of the gap between the approximate convex problem and the original nonlinear, nonconvex problem. Simulation results validate the effectiveness of the adopted approximation. The performance of the adaptive NOMA/OMA scheme by solving the convex optimization is shown to be close to that of max-weight policy solved by exhaustive search. Besides, the adaptive NOMA/OMA scheme has achieved significant performance improvement comparing to the OMA scheme, especially when the number of clients in the network is large and the transmission SNR is high.
\end{abstract}
\begin{IEEEkeywords}
	Information freshness, Age of Information, multiuser scheduling, non-orthogonal multiple access, Markov decision process and power allocation.
\end{IEEEkeywords}
\section{Introduction}
Recently, researchers have shown enormous interest (see, e.g, \cite{kaul2012real,wang2019minimizing2,ceran2019average,wang2018skip,wang2019minimizing,kaul2012status,gu2019timely,sun2017update,costa2016age,gu2019minimizing,kadota2018optimizing,kadota2018scheduling,kadota2019minimizing,yates2017status,jiang2018can,maatouk2019minimizing1,chen2020age,hsu2019scheduling}) in a new performance metric, termed \textit{Age of Information} (AoI), thanks to its capability in characterizing the timeliness of data transmission in status update systems. The timeliness of status update is of great importance, especially in real-time monitoring applications, in which the dynamics of the monitored processes need to be well grasped at the monitor side for further actions. The AoI is defined as the time elapsed since the generation time of the latest received status update at the destination \cite{kaul2012real}. According to this definition, the AoI is jointly determined by the transmission interval and the transmission delay. 

Early work on the analysis and optimization of AoI in various networks has mainly focused on the simple single-source system model \cite{ceran2019average,wang2019minimizing2,wang2018skip,wang2019minimizing,kaul2012real,kaul2012status,gu2019timely,sun2017update,costa2016age,gu2019minimizing}. Recent efforts on AoI optimization pay more attention to the more general multi-source systems \cite{hsu2019scheduling,yates2017status,kadota2018optimizing,kadota2018scheduling,kadota2019minimizing,jiang2018can,maatouk2019minimizing1,chen2020age}. For systems with multiple sources, the AoI of each user depends on the transmission scheduling of all devices. In this line of research, the authors in \cite{kadota2018optimizing} considered a base station (BS) receiving status updates from multiple nodes with a \textit{generate-at-will} status arrival model in the uplink. A BS serving status updates to multiple nodes in the downlink with the randomly generated status update was investigated in \cite{kadota2019minimizing}. Both of them derived the lower bound of the weighted sum of the expected AoI of the considered network and compared the lower bound with that of various suboptimal scheduling policies, including Whittle index policy and max-weight policy, etc. The authors in \cite{jiang2018can} also considered systems with stochastic status update arrivals and derived the Whittle index policy in closed form. A decentralized policy was proposed in \cite{jiang2018can}, which was shown to achieve near-optimal performance. Another branch of this research line is to analyze and optimize the AoI of the networks with random access protocols. Particularly, the AoI performance of slotted ALOHA was investigated in \cite{yates2017status,chen2020age} and that of Carrier Sense Multiple Access (CSMA) was investigated in \cite{maatouk2019minimizing1}. 

All aforementioned studies on AoI have concentrated on the orthogonal multiple access (OMA) scheme. That is, only one status update packet can be delivered and received in each time slot. Very recently, the authors in \cite{maatouk2019minimizing} have for the first time investigated the potential of applying non-orthogonal multiple access (NOMA) in reducing the average AoI of a two-node network. The results in \cite{maatouk2019minimizing} showed that OMA and NOMA can outperform each other in different setups. In fact, NOMA has been regarded as a promising technique to deal with large-scale Internet of Thing (IoT) deployment \cite{ding2017application,saito2013non,dong2017non,yu2017performance}. The basic idea of NOMA is to leverage the power domain to enable multiple clients to be served at the same time or frequency band. Compared to OMA, NOMA has the potential to reduce AoI by improving spectrum utilization efficiency. Specifically, more than one client can be served by the BS using NOMA, resulting in a possible AoI drop of more than one client. However, in OMA, only the served client may have AoI drop and the AoI of all other clients will increase. In this context, a natural question arises: how should a multiuser system adaptively switch between OMA and NOMA modes to minimize the long-term average weighted sum of AoI of the network? To the best of authors' knowledge, the answer to this question remains unknown in the literature. The NOMA scheme allows the BS to serve more clients in each time slot at the cost of a high transmission error probability, while the OMA scheme serves at most one client in each time slot with a smaller transmission error probability. This makes the optimal multiuser scheduling problem with adaptive NOMA/OMA non-trivial. In Fig.\ref{fig-0} we depict an example of the AoI evolution under the adopted adaptive NOMA/OMA scheduling for a two-client network. We can observe from Fig. \ref{fig-0} that the BS may take a risk to serve both clients in order to achieve small AoI for both clients at next time slot when the age difference between clients is relatively small. When the age difference between clients is large with one age being small, the BS tends to use OMA to serve the client with larger AoI.
\begin{figure}[!tbp]
	\centerline{\includegraphics[width=0.5\textwidth]{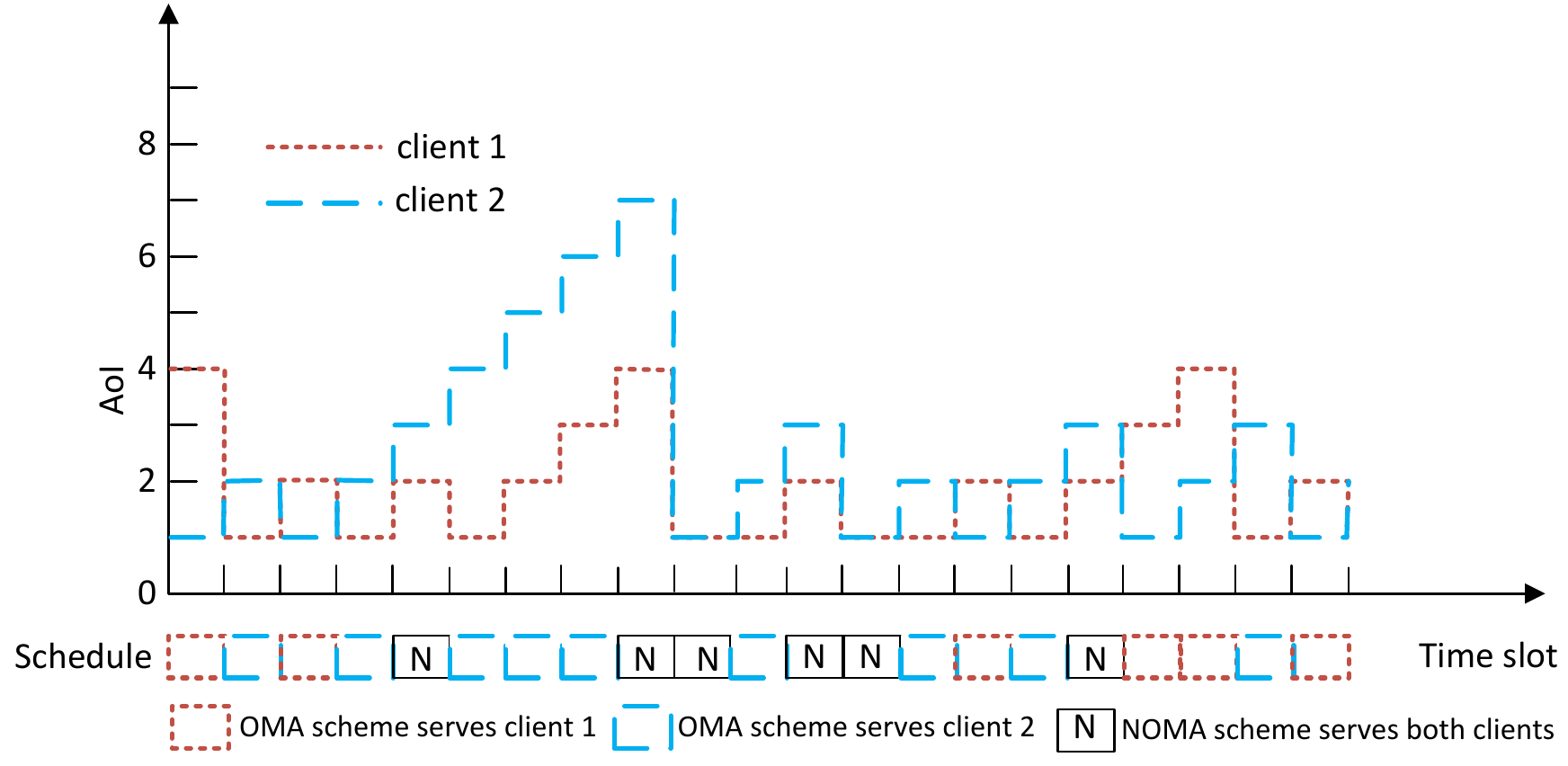}}
	\caption{An illustration of AoI evolution for a two-client network under the adopted adaptive NOMA/OMA scheduling.}
	\label{fig-0}
\end{figure}

Motivated by the gap above, in this paper we consider a wireless network with a BS that conducts timely status updates to multiple clients in a time-slotted manner. The BS is able to adaptively switch between NOMA and OMA for the downlink transmission. To achieve reduced AoI performance, the BS needs to decide which scheme (i.e., NOMA or OMA) to use at the beginning of each time slot. For the OMA scheme, the BS should further decide which client to serve. For the NOMA scheme, the BS needs to further decide the power allocated to each scheduled client. That is, when using NOMA, the BS should decide which clients to serve by allocating non-zero power for status update transmission to these clients; the rest unselected clients will be allocated with zero power.
\subsection{Contributions}
The main contributions of this paper lie in the following two aspects: 
 \begin{itemize}
 	\item For the two-client scenario,  we develop the optimal policy for the BS to decide whether to use NOMA or OMA for each downlink transmission based on the instantaneous AoI of both clients by formulating a Markov Decision Process (MDP) problem. We prove the existence of the optimal stationary and deterministic policy, and perform action elimination to reduce the action space for lower computation complexity. The optimal policy is shown to have a switching-type property with obvious decision switching boundaries. A suboptimal policy with lower computation complexity is also proposed, which can achieve near-optimal performance, as shown by the simulation results.
 	\item For the multi-client scenario, the optimal policy is not computationally tractable due to the exponentially increasing state space for linearly increasing number of clients, the coupled AoI evolution across clients and large action space considering different combinations of power allocated to each client. To adaptively switch between NOMA and OMA, we formulate a nonlinear optimization problem to determine the optimal power allocated to each client by maximizing the weighted sum of expected AoI drop of the network within each time slot, inspired by the near-optimal policy and the max-weight policy in \cite{kadota2018optimizing,kadota2018scheduling,kadota2019minimizing}. 
 	We manage to resolve the formulated problem by approximating it as a convex optimization problem. We also derive the upper bound of the gap between the approximate convex problem and the original nonlinear, nonconvex problem. Simulation results show the effectiveness of the adopted approximation. The performance of the adaptive NOMA/OMA scheme by solving the convex optimization  problem is shown to be close to that of max-weight policy solved by exhaustive search. Besides, the adaptive NOMA/OMA scheme can achieve significantly lower average AoI, comparing to OMA scheme, especially when the number of clients in the network is large and the transmission SNR is high.
 \end{itemize}
 
\subsection{Related Work}
We note that MDP method has been widely used in designing optimal scheduling policies for average AoI minimization \cite{ceran2019average,wang2019minimizing2,wang2018skip,wang2019minimizing,hsu2019scheduling}. In multiuser systems, the states of the system are jointly determined by the AoI values of all users, where the MDP method becomes intractable as the number of users increases. This is because the increasing number of users will lead to exponentially exploding state space and enormous computation complexity, known as the curse of dimensionality \cite{powell2007approximate}. Thus, several attempts \cite{hsu2019scheduling,kadota2018optimizing,kadota2018scheduling,kadota2019minimizing,jiang2018can,maatouk2019minimizing1} have been made to seek for low-complexity scheduling algorithms. Whittle index policy has been investigated in \cite{hsu2019scheduling,kadota2018optimizing,kadota2018scheduling,jiang2018can}, where the indexability of their considered problem was proved. This policy demonstrated near-optimal performance in numerical simulations. To implement the Whittle index policy, the Whittle index function needs to be derived beforehand and the user with the largest Whittle index will be scheduled to update its status. However, it can be challenging to prove indexability and derive closed-form Whittle index function for many problems\cite{gittins2011multi}. To address these issues, the authors in\cite{sun2019optimizing} proposed an Approximate Index Policy. On the other hand, the max-weight policy has been studied in \cite{kadota2018optimizing,kadota2018scheduling,kadota2019minimizing} and the upper bound of its average age performance was analyzed. Simulation results in \cite{kadota2018optimizing,kadota2018scheduling} showed negligible performance gap between the max-weight policy and the optimal policy, and similar performance between Whittle index policy and max-weight policy. 
All the aforementioned work focused on OMA scheme, i.e., different users cannot update their status simultaneously. The potentials of NOMA scheme on reducing AoI were first investigated in \cite{maatouk2019minimizing} considering a simple two-user network. The analytical expression of the total average AoI of the network using  NOMA scheme and that of conventional OMA environments were derived via Stochastic Hybrid Systems (SHS) and compared in different setups. The simulation results have illustrated the advantage of NOMA for the case of relatively high spectral efficiency in comparison with OMA. The authors in \cite{maatouk2019minimizing} focused on analyzing the AoI of two-user network that always uses NOMA to investigate the potential of NOMA scheme by comparing it with the AoI of same network adopting OMA scheme. In contrast, our work considers how to dynamically schedule the communications in a more general multiuser system by adaptively switching between OMA and NOMA modes to minimize the AoI of the network. The considered system is more practical due to the increased number of users and the scheduling scheme is more comprehensive including which user(s) to schedule and the corresponding power allocation. 

\subsection{Organization}
The rest of the paper is organized as follows. Section II introduces the system model. We study optimal policy for the two-user scenario and propose a near-optimal policy in Section III. Section IV studies the multi-client scenario. Numerical results are presented in Section V to validate the theoretical analysis and the effectiveness of the proposed adaptive NOMA/OMA scheme. Finally, conclusions are drawn in Section VI.

\section{System Model}
We consider a multiuser wireless network, in which a BS conducts timely status updates to $N$ clients in a slotted manner. At the beginning of each time slot, the BS can generate a status update packet for each client, which is known as \textit{generate-at-will} in the literature \cite{ceran2019average,kadota2018optimizing,wang2019minimizing2}. Adaptive NOMA/OMA transmission scheme is adopted by the BS. Specifically, the BS can adaptively switch between NOMA and OMA for the downlink transmission. With NOMA, it is possible for more than one client to receive their packets simultaneously within one time slot. At the end of each time slot, if client $i$ has received its packet successfully from the BS, it will send an acknowledgment (ACK) to the BS. The ACK link from all clients to the BS is considered to be error-free and delay-free. 

We use \textit{Age of Information} (AoI) \cite{kaul2012real} to characterize the timeliness of the information received at each client. AoI is defined as the time elapsed since the generation time of the latest received information at the destination side. Mathematically, the AoI of client $i$ in time $t$, denoted by $\Delta_i(t)$, is $t-u_i(t)$, where $u_i(t)$ denotes the generation time of latest received status update at time $t$. According to the considered \textit{generate-at-will} model, if client $i$ has successfully received its status update from the BS, its AoI will decrease to $1$, otherwise its AoI increases by $1$. Mathematically, we have
\begin{equation}\label{aoievo}
\Delta_i(t+1)=
\left\{
\begin{array}{rcl}
\Delta_i(t)+1 ,& &v_i(t)=0 ,\\
1,& &v_i(t)=1, \\
\end{array}
\right.
\end{equation} where $v_i(t)$ is the indicator that is equal to $1$ when the client $i$ receives its status update correctly from the BS in time slot $t$, and $v_i(t)=0$ otherwise. The weighted sum of the expected AoI of all clients is adopted to measure the network-wide information timeliness, which is given by
\begin{equation}\label{wa}
\bar{\boldmath \Delta}=\lim_{T \rightarrow \infty}\sup\frac{1}{T}\mathbb{E}\left[\sum_{i=1}^{N}\sum_{t=1}^{T}w_i\Delta_i(t)\right],
\end{equation} where $w_i$ is the weight coefficient of client $i$ with $\sum_{i=1}^N w_i=1$, and the expectation is taken over all possible system dynamics.

For ease of understanding, we first consider the two-client scenario, i.e., $N=2$. We later will extend our design to the general case with more clients. In the OMA mode, the BS only conducts transmission to a single client. In this context, if time slot $t$ is assigned for the transmission to client $i$, $i\in\{1,2\}$, the signal received at the client $i$ can be written as 
\begin{equation}\label{observe}
y_i^O(t)=h_i(t) \sqrt{P}s_i(t)+n_i(t),
\end{equation} where $P$ is the constant transmission power of the BS; $s_i$ is the status update message from the BS to client $i$; $h_i$ is the channel coefficient between the BS and client $i$. Specifically,
\begin{equation}\label{h1}
 h_i=\sqrt{d_i^{-\tau}}g_i,
 \end{equation} 
 where the normalized distance $d_i=c_i/c_0$, with $c_i$ and $c_0$ denoting the distance between client $i$ to the BS and the baseline distance, respectively. Parameter $\tau$ denotes the path loss exponent and $g_i \sim \mathcal{CN}(0,1)$ with $\mathcal{CN}$ denoting complex normal distribution. Without loss of generality, we consider $c_1<c_2$, i.e., $\mathbb{E}[{|h_1|}^2]>\mathbb{E}[{|h_2|}^2]$. Random variable $n_i$ is the complex additive Gaussian noise with variance $\sigma_i^2$. For simplicity, we assume the variance of $n_i$ is identical for both clients, i.e., $\sigma_i^2=\sigma^2$, $\forall i$. After receiving the signal, the information can be decoded in an interference-free manner with a SNR $\gamma_i={|h_i|}^2\rho$, where $\rho=P/\sigma^2$ is the transmission SNR. Then, the rate for client $i$ can be expressed as $R_i^{OMA}=\log(1+\gamma_i)$. The outage probability at client $i$ using OMA is given by
\begin{equation}\label{oma}
\begin{split}
P_i^{O}=1-P\left(R_i^{OMA}\geq R_i\right)=1-\exp\left(-\frac{(2^{R_i}-1)d_i^{\tau}}{\rho}\right),
\end{split}
\end{equation} where $R_i$ is the target rate of client $i$. For simplicity, we assume that the target rates of both clients are the same, i.e., $R_1 = R_2 = R$. Note that the framework developed for the two-user scenario, can be readily extended to the case with distinct target rates. 

On the other hand, when NOMA is conducted in time slot $t$, the signals to different clients are combined in the power domain at the BS by allocating different power levels to them. Through successive interference cancellation (SIC), it is possible for two clients to successfully recover their corresponding information in the same time slot. We consider fixed power transmission, and the observation at client $i$ can be expressed as 
\begin{equation}
y_i^N(t)=h_i(t) (\sqrt{\alpha_1 P}s_1(t)+\sqrt{\alpha_2 P}s_2(t))+n_i(t),
\end{equation} where $\alpha_i$ is the power allocation coefficient, and we readily have $\alpha_1+\alpha_2=1$ to achieve the best possible performance. It is assumed that the BS only has the knowledge of statistical channel state information (CSI) of its channels to both clients, while the clients as receivers have perfect knowledge of CSI, as in \cite{cui2016novel,yu2017performance}. Thus, we have $\alpha_1<\alpha_2$ according to the NOMA principle. 

Then, for client $2$ (i.e., the far user), it decodes its message from the BS directly by treating $s_1$ as interference. The received SINR can be written as
$\gamma_{22}=\alpha_2 {|h_2|}^2/(\alpha_1 {|h_2|}^2+1/\rho)$. Therefore, the outage probability of client $2$ using NOMA is given by
\begin{equation}
\label{outage2}
\begin{split}
P_2^{N}&=1-P(\log(1+\gamma_{22})\geq R)=1-\exp\left(-\frac{(2^{R}-1)d_2^\tau}{\rho (\alpha_2-\alpha_1(2^{R}-1))}\right),
\end{split}
\end{equation} where we enforce  $\alpha_2-\alpha_1(2^{R}-1)>0$, i.e., $\alpha_2>\frac{2^R-1}{2^R}$.

For client $1$ (i.e., the near user), it will conduct SIC. Specifically, client $1$ will first decode $s_2$ as what client $2$ has done by treating $s_1$ as interference. The received SINR of client $1$ when decoding $s_2$, denoted by $\gamma_{12}$, can thus be similarly expressed as
$\gamma_{12}=\alpha_2 {|h_1|}^2/(\alpha_1 {|h_1|}^2+1/\rho)$. Once $s_2$ is successfully decoded, client $1$ will then decode $s_1$ without interference, and the resultant SNR is
$\gamma_{11}=\alpha_1 {|h_1|}^2\rho$. The outage probability of client $1$ using NOMA can thus be calculated as
\begin{equation}
\label{outage1}
\begin{split}
&P_1^{N}=1-P(\log(1+\gamma_{12})\geq R\ \& \ \log(1+\gamma_{11})\geq R)\\
&=1-\exp\left(-\max\left\{\frac{(2^{R}-1)d_1^\tau}{\rho (\alpha_2-\alpha_1(2^{R}-1))},\frac{(2^{R}-1)d_1^{\tau}}{\rho\alpha_1}\right\}\right).
\end{split}
\end{equation}

Comparing the above outage probability expressions between NOMA and OMA schemes, we can find that NOMA offers more chance for the BS to transmit fresh status updates to both clients at the cost of a higher outage probability. Thus, to maintain the freshness of the information received at each client, at the beginning of each time slot, the BS needs to carefully decide whether to use NOMA or OMA scheme.  In addition, the outage probability of NOMA is determined by the power allocation among the two clients. As such, when using NOMA, the BS should appropriately allocate power for the transmission to each client. The power allocated to each client is considered to be discrete in the two-client system. Specifically, the power allocated to client $i$, denoted by $p_i$, can only take the value from the discrete set $\{0,p,2p,3p,..Lp\}$ with $p=P/L$ and $p_1+p_2=P$, as $\alpha_1=1-\alpha_2$. That is, $\alpha_i$ can take the value from $\{0,\frac{1}{L},\frac{2}{L},\frac{3}{L},..,1\}$. As client $2$ is far from the BS (i.e., $c_1<c_2$), to effectively use NOMA, $\alpha_2$ should be larger than $\alpha_1$ when applying NOMA, i.e., $\alpha_2>0.5$. Combining it with the previous condition $\alpha_2>\frac{2^R-1}{2^R}$, one can deduce that $\alpha_2$ can only take value from $\{0,\max\{\frac{1}{2}+\frac{1}{L},\lceil\frac{(2^R-1)L}{2^R}\rceil\frac{1}{L}\},\max\{\frac{1}{2}+\frac{1}{L},\lceil\frac{(2^R-1)L}{2^R}\rceil\frac{1}{L}\}+\frac{1}{L},...,1\}$.

Let $\alpha_2(t)$ denote the power allocation coefficient for client $2$ in time slot $t$. Specifically, $\alpha_2(t)=0$ or $\alpha_2(t)=1$ indicates the BS uses OMA scheme, conducting orthogonal transmission to client $1$ and client $2$, respectively; otherwise, the BS uses NOMA scheme, serving both clients with the amount of power $\alpha_2(t)P $ allocated to client $2$ and $(1-\alpha_2(t))P$ to client $1$.


Let $\pi$ denote the stationary transmission policy at the BS, which maps system states to action space. Denoting $a_t$ as the action at time slot $t$, $a_t\in \{0,\max\{\lceil{\frac{L}{2}\rceil}+1,\lceil\frac{(2^R-1)L}{2^R}\rceil\},...,L\}$ indicates that the BS allocates $a_tp$ amount of power to client $2$. If $a_t=0$, the BS chooses OMA scheme and only transmits information to client $1$; if $a_t=L$, the BS chooses OMA scheme and transmits information to client $2$;  otherwise, the BS chooses NOMA scheme, with $a_tp$ amount of power allocated to client $2$ and $P-a_tp$ allocated to client $1$. Our design objective is to find the optimal policy to be adopted by the BS that can adaptively switch between NOMA and OMA schemes to minimize the weighted sum of the expected AoI for both clients. The problem can be formally formulated as follows
\begin{prob}
\label{p1}
		\begin{equation}
	    \min_{\pi} \bar{\boldmath \Delta}(\pi).\\
		\end{equation}
\end{prob} 
\section{Optimal and Near-Optimal Policies for Two-Client System}
In this section, we resolve Problem \ref{p1} by formulating it as an MDP problem and investigate the age-optimal policy that minimizes the weighted sum of the expected AoI of both clients. By analyzing the structural results of the optimal policy, we then devise a near-optimal policy with lower computation complexity.
\subsection{MDP Formulation} 
We first recast Problem \ref{p1} into an MDP problem, described by a tuple $\{\mathcal{S},\mathcal{A},\mathrm{P},r\}$, where
 \begin{itemize}
	\item State space $\mathcal{S}=\mathbb{Z^+}\times\mathbb{Z^+}$: The state in time slot $t$ is composed by the instantaneous AoI of both clients, $s_t\triangleq (\Delta_{1,t},\Delta_{2,t})$.
	\item Action space $\mathcal{A}=\{0,\max\{\lceil{\frac{L}{2}\rceil}+1,\lceil\frac{(2^R-1)L}{2^R}\rceil\},...,L\}$: the detailed description of action $a_t \in \mathcal{A}$ has been provided at the end of the previous section. 
	\item Transition probabilities $\mathrm{P}$: $P(s_{t+1}|s_t,a_t)$ is the probability of the transition from state $s_t$ to $s_{t+1}$ when taking action $a_t$. According to the outage probability of both clients using either NOMA or OMA given in Section II, we have the following transition probabilities,
	\begin{equation}
	\label{e3}
	\begin{aligned}
	&P((1,\Delta_2+1)|(\Delta_1,\Delta_2),a=0)=1-P_1^O,\\
	&P((\Delta_1+1,\Delta_2+1)|(\Delta_1,\Delta_2),a=0)=P_1^O,\\
	&P((\Delta_1+1,1)|(\Delta_1,\Delta_2),a=L)=1-P_2^O,\\
	&P((\Delta_1+1,\Delta_2+1)|(\Delta_1,\Delta_2),a=L)=P_2^O,\\	
	\end{aligned}
	\end{equation} and for $i\neq 0,N$
	\begin{equation}
	\begin{aligned}
	\label{e3*}
	&P((1,\Delta_2+1)|(\Delta_1,\Delta_2),a=i)=(1-P_1^N(a))P_2^N(a),\\
	&P((\Delta_1+1,1)|(\Delta_1,\Delta_2),a=i)=(1-P_2^N(a))P_1^N(a),\\	
	&P((1,1)|(\Delta_1,\Delta_2),a=i)=(1-P_1^N(a))(1-P_2^N(a)),\\
	&P((\Delta_1+1,\Delta_2+1)|(\Delta_1,\Delta_2),a=i)=P_1^N(a)P_2^N(a),
	\end{aligned}
	\end{equation}where $P_1^N(a)$ and $P_2^N(a)$ are the outage probability of client $1$ and client $2$, respectively, using NOMA with $\alpha_1=1-\frac{a}{L}$ and $\alpha_2=\frac{a}{L}$. Note that in \eqref{e3} and \eqref{e3*}, the time superscript for the state $(\Delta_{1,t},\Delta_{2,t})$ and action $a_t$ is omitted for brevity.
	\item  $r: \mathcal{S} \times \mathcal{A}  \rightarrow \mathbb{R}$ is the one-stage reward function of state-action pairs, defined as $r(s_t,a_t)=w_1\Delta_{1,t}+w_2\Delta_{2,t}$.
 \end{itemize}
Given any initial state $s_0$, the infinite-horizon average reward of any feasible policy $\pi$, can be expressed as 
\begin{equation}
\label{e4}
C(\pi,s_0)=\lim_{T \rightarrow \infty}\sup\frac{1}{T} \sum_{k=0}^{T}{\mathbb E}_{s_0}^\pi[r(s_k,a_k)|s_0].
\end{equation} We are now ready to transform Problem \ref{p1} to the following MDP problem
\begin{prob}
	\label{p2}
	\begin{equation}
	\label{pro2}
	\min_{\pi} C(\pi,s_0).
	\end{equation}
\end{prob} 
To proceed, we first investigate the existence of an optimal stationary and deterministic policy of Problem \ref{p2} and arrive at the following theorem.
\begin{theorem}
	\label{TE}
	There exists a constant $J^{*}$, a bounded function $h(\Delta_1,\Delta_2):\mathcal{S} \rightarrow \mathbb{R}$ and a stationary and deterministic policy $\pi^{*}$, satisfying the average reward optimality equation,
	\begin{equation}
	\label{e10}
	J^{*}+h(\Delta_1,\Delta_2)=\min_{a\in \mathcal{A}} (w_1\Delta_1+w_2\Delta_2 \mathbb  +{\mathbb{E}}[h(\hat{\Delta}_1,\hat{\Delta}_2)]),
	\end{equation} $\forall (\Delta_1,\Delta_2) \in \mathcal{S}$, where $\pi^{*}$ is the optimal policy, $J^{*}$ is the optimal average reward, and $(\hat{\Delta}_1,\hat{\Delta}_2)$ is the next state after $(\Delta_1,\Delta_2)$ taking action $a$.  
\end{theorem}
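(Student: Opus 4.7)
The plan is to apply the standard vanishing-discount approach for average-reward MDPs with countable state space and unbounded one-stage cost: first study the $\beta$-discounted counterpart of Problem~\ref{p2}, verify Sennott's regularity conditions on the discounted value function $V_\beta$ and its relative form $h_\beta(s):=V_\beta(s)-V_\beta(s_0)$ with reference state $s_0=(1,1)$, and then pass to the limit $\beta\to 1$ to extract the triple $(J^*,h,\pi^*)$ satisfying \eqref{e10}.

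Since $\mathcal{A}$ is finite, standard discounted-MDP theory directly gives a stationary deterministic optimal policy for each $\beta\in(0,1)$, and finiteness of $V_\beta(s)$ follows by evaluating any simple stationary policy (for example, always-NOMA with a fixed feasible power split) whose expected per-stage reward grows at most linearly in time. For the uniform lower bound $h_\beta(s)\ge -L$, I would prove by induction on the value-iteration recursion that $V_\beta$ is monotone nondecreasing in each coordinate of $s=(\Delta_1,\Delta_2)$: the one-stage reward $w_1\Delta_1+w_2\Delta_2$ is trivially monotone, and inspection of the kernels in \eqref{e3}--\eqref{e3*} shows that under every action a componentwise-larger state produces a componentwise-stochastically-larger successor (the reset-target ``$1$'' is unchanged while the ``$\Delta_i+1$'' outcomes move up in lockstep with $\Delta_i$). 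Monotonicity then propagates through $\min_a$, and since $s_0$ is the componentwise minimum of $\mathcal{S}$ we obtain $h_\beta(s)\ge 0$ uniformly in $\beta$.

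For the uniform upper bound $h_\beta(s)\le M(s)$, I would employ an always-NOMA reference policy $\pi_0$ with a fixed feasible power split for which $p:=(1-P_1^N)(1-P_2^N)>0$. Under $\pi_0$ the one-step probability of transitioning to $s_0$ equals $p$ independently of the current state, so the first hitting time $T_0$ to $s_0$ is geometric with parameter $p$. Bounding the running reward by $w_1\Delta_1+w_2\Delta_2+t$ at every time $t<T_0$ (since each AoI grows by at most one per slot) and combining the strong-Markov inequality $V_\beta(s)\le \mathbb{E}_s^{\pi_0}\!\left[\sum_{t=0}^{T_0-1}\beta^t r(s_t,a_t)\right]+\mathbb{E}_s^{\pi_0}[\beta^{T_0}]\,V_\beta(s_0)$ with $V_\beta(s_0)\ge 0$ yields an envelope $M(s)$ that is affine in $\Delta_1+\Delta_2$, uniformly in $\beta$; the required summability $\sum_{s'}P(s'|s,a)M(s')<\infty$ is immediate because at most four successor states arise under any action. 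Sennott's theorem then produces the constant $J^*$, a bounded-below $h$, and a stationary deterministic $\pi^*$ satisfying \eqref{e10}.

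The main obstacle is this last step: because $r$ grows linearly in the state while $T_0$ has only a geometric tail, one must control their joint behaviour carefully, and the NOMA reference policy is essential. Any pure-OMA policy necessarily increases one of the two AoIs in the very slot it resets the other, so $s_0$ would be unreachable in a single step and the clean geometric-hitting-time bound used above would break.
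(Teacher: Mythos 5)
Your argument is correct in substance but takes a genuinely different route from the paper. The paper proves Theorem~\ref{TE} by verifying Assumptions 3.1--3.3 of \cite{guo2006average}: it takes the weight function $\omega(s)=w_1\Delta_1+w_2\Delta_2$, checks $|r(s,a)|\le M\omega(s)$ together with a geometric drift condition $\sum_{\hat s}\omega(\hat s)P(\hat s\mid s,a)\le\beta\omega(s)+m$ with $\beta<1$, and then sandwiches the relative discounted value $h_\alpha$ inside the weighted Banach space $B_{\omega}(\mathcal{S})$. You instead verify Sennott-type vanishing-discount conditions directly: finiteness of $V_\beta$, the uniform lower bound $h_\beta\ge 0$ from stochastic monotonicity of the kernel plus the fact that $(1,1)$ is the componentwise-minimal state, and a state-dependent upper bound from the return time to $(1,1)$ under an always-NOMA reference policy, which is geometric because the joint-success probability is state-independent and strictly positive on the admissible power splits. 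Your route is more elementary and leans on structure the paper never names explicitly (at most four successors per action, one-step reset to $(1,1)$ under NOMA); it also sidesteps a delicate point in the paper's verification, namely that the ratio $\sum_{\hat s}\omega(\hat s)P(\hat s\mid s,a)/\omega(s)$ tends to $1$ as the age of an unserved client grows, so the claimed uniform $\beta<1$ is not actually available with that choice of $\omega$. Two points to tighten on your side: Sennott's basic conditions per se deliver the average-cost optimality \emph{inequality}, so to reach the \emph{equation} \eqref{e10} you should spell out the limit passage --- here painless, since $\mathcal{A}$ is finite and each state-action pair has finitely many successors, a diagonal subsequence with $h_{\beta_n}\to h$ pointwise lets you take limits term by term in the discounted optimality equation; and the $h$ you produce is bounded below with an affine upper envelope rather than bounded, which is what the argument genuinely requires even though the theorem statement says ``bounded.''
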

\begin{proof}
	See Appendix \ref{A1}.
\end{proof}	
According to Theorem \ref{TE}, the optimal policy is
stationary and deterministic, i.e., it is time-invariant and deterministically selects an action in each time slot with no randomization.
\subsection{Action Elimination}
In this subsection, we establish action elimination by analyzing the property of the formulated MDP problem, which can reduce action space of each state for lower computation complexity. According to \eqref{outage2} and \eqref{outage1}, and the fact $\alpha_1+\alpha_2=1$, the outage probability of client $2$ using NOMA (i.e., $P_2^N$) is decreasing in $\alpha_2$, i.e., $P_2^N(a)$ is decreasing in action $a$ when $ \max\{\lceil{\frac{L}{2}\rceil}+1,\lceil\frac{(2^R-1)L}{2^R}\rceil\}<a<L$. However, the outage probability of client $1$ using NOMA (i.e., $P_1^N$) is decreasing in $\alpha_2$ when $\frac{2^R-1}{2^R}<\alpha_2<\frac{2^R}{2^R+1}$ and is increasing in $\alpha_2$ when $\frac{2^R}{2^R+1}<\alpha_2<1$. That is, $P_1^N(a)$ is decreasing in $a$ when $a\in \{\max\{\lceil{\frac{L}{2}\rceil}+1,\lceil\frac{(2^R-1)L}{2^R}\rceil\},...,\lfloor\frac{2^RL}{2^R+1}\rfloor\}$ and increasing in $a$ when $a\in\{\lceil\frac{2^RL}{2^R+1}\rceil,\lceil\frac{2^RL}{2^R+1}\rceil+1,..., L-1\}$. As such, the action  $a=\lfloor\frac{2^RL}{2^R+1}\rfloor$  has a better performance in reducing AoI of both clients, with lower outage probability comparing to $a\in \{\max\{\lceil{\frac{L}{2}\rceil}+1,\lceil\frac{(2^R-1)L}{2^R}\rceil\},\max\{\lceil{\frac{L}{2}\rceil}+1,\lceil\frac{(2^R-1)L}{2^R}\rceil\}+1,...,\lfloor\frac{2^RL}{2^R+1}\rfloor\}$. Thus, the action space can be reduced to $a\in\{0,\lfloor\frac{2^RL}{2^R+1}\rfloor,\lfloor\frac{2^RL}{2^R+1}\rfloor+1,..., L\}$.
\subsection{Structural Results on Optimal Policy}
In this subsection, we derive two structural results of the optimal policy that offer an effective way to reduce the offline computation complexity and online implementation hardware requirement.
\begin{theorem}
	\label{T2}
	The optimal policy $\pi^*$ has a switching-type policy. That is, denoting $c$ and $d$ as any action from action space $\{0,\lfloor\frac{2^RL}{2^R+1}\rfloor,\lfloor\frac{2^RL}{2^R+1}\rfloor+1,..., L\}$, 
	\begin{itemize}
		\item If $\pi^*((\Delta_1,\Delta_2))=c$, then $\pi^*((\Delta_1,\Delta_2+z))=d$, where $z$ is any positive integer and $d\geq c$,
		\item If $\pi^*((\Delta_1,\Delta_2))=c$, then $\pi^*((\Delta_1+z,\Delta_2))=d$, where $z$ is any positive integer and $d\leq c$.
	\end{itemize}
\end{theorem}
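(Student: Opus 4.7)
The plan is to prove the switching property via a standard monotone comparative statics argument applied to the Bellman equation \eqref{e10}, with the key prerequisite being submodularity of the state--action cost $Q(s,a) := w_1\Delta_1 + w_2\Delta_2 + \mathbb{E}[h(\hat{s}) \mid s,a]$ in $(\Delta_2,a)$ and supermodularity in $(\Delta_1,a)$. These lattice properties of $Q$ would be inherited from analogous structural properties of the relative value function $h$, which I would establish by induction on the value iteration $V_{n+1}(s) = \min_{a\in\mathcal{A}} Q_n(s,a)$ starting from $V_0 \equiv 0$; by Theorem \ref{TE} the iteration converges (up to an additive constant) to $h$, so any property preserved under the pointwise $\min$ and limit transfers to $h$.

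The core induction hypothesis is that, for every $n$, (a) $V_n$ is component-wise non-decreasing, and (b) $Q_n$ is submodular in $(\Delta_2,a)$ and supermodular in $(\Delta_1,a)$ over the reduced action space of Section III.B. Monotonicity is the easier piece: it follows from monotonicity of the stage cost $w_1\Delta_1 + w_2\Delta_2$ together with a stochastic-dominance check of the next-state distribution under any fixed $a$ using \eqref{e3}--\eqref{e3*}. For the modularity step, I would expand the mixed difference
\[
Q_n(\Delta_1,\Delta_2+z,a_2) - Q_n(\Delta_1,\Delta_2+z,a_1) - Q_n(\Delta_1,\Delta_2,a_2) + Q_n(\Delta_1,\Delta_2,a_1)
\]
using \eqref{e3*}; the stage-cost terms cancel and what remains is a linear combination of differences of $V_n$ whose coefficients are determined by the success probabilities $p_i(a) := 1 - P_i^N(a)$ (with $a=0,L$ handled by the degenerate OMA cases). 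The argument for $(\Delta_1,a)$-supermodularity is symmetric.

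Once the modularity of $Q$ is in hand, both bullets follow by the textbook interchange argument. Suppose, to the contrary, that $\pi^*(\Delta_1,\Delta_2)=c$ but $\pi^*(\Delta_1,\Delta_2+z)=d$ with $d<c$. Adding the two optimality inequalities $Q(\Delta_1,\Delta_2,c) \le Q(\Delta_1,\Delta_2,d)$ and $Q(\Delta_1,\Delta_2+z,d) \le Q(\Delta_1,\Delta_2+z,c)$ yields
\[
Q(\Delta_1,\Delta_2+z,d) + Q(\Delta_1,\Delta_2,c) \;\le\; Q(\Delta_1,\Delta_2+z,c) + Q(\Delta_1,\Delta_2,d),
\]
which together with submodularity in $(\Delta_2,a)$ forces equality, so one may break ties by taking $d \ge c$. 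The second bullet is handled symmetrically using supermodularity of $Q$ in $(\Delta_1,a)$.

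The main obstacle I anticipate is closing the induction step for the modularity of $Q_n$. The transition in \eqref{e3*} couples all four next-state values through the $a$-dependent products $p_1(a)p_2(a)$, $(1-p_1(a))p_2(a)$, etc., so the mixed difference does not split into clean $(\Delta_1)$ and $(\Delta_2)$ pieces, and a careful pairing of terms, combining the inductive monotonicity of $V_n$ with the inductive modularity, is needed to sign every coefficient. The action-space reduction in Section III.B is what makes this tractable: removing the non-monotone branch of $P_1^N(a)$ leaves $p_1(\cdot)$ non-increasing and $p_2(\cdot)$ non-decreasing in $a$, which is precisely the one-sided monotonicity required to keep all coefficient signs aligned throughout the bookkeeping.
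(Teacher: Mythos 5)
Your overall strategy---monotone comparative statics on the state--action value $Q(s,a)$, with submodularity in $(\Delta_2,a)$ and supermodularity in $(\Delta_1,a)$ established through the dynamic programming recursion, followed by the interchange argument---is the same skeleton the paper uses, since the paper simply invokes \cite[Theorem~8.11.3]{puterman2014markov}, whose proof is exactly this induction. The problem is that the decisive step is missing: you state the induction hypothesis that $Q_n$ is submodular, you write down the mixed difference that must be signed, and then you explicitly defer the "careful pairing of terms" needed to sign every coefficient. That pairing \emph{is} the proof; without it the argument is a plan, not a demonstration. Worse, the route you sketch is harder than it needs to be: expanding $\mathbb{E}[V_n(\hat\Delta_1,\hat\Delta_2)]$ via the joint transition probabilities in \eqref{e3*} produces cross terms with coefficient $p_1(a)p_2(a)$, the product of a non-increasing and a non-decreasing function of $a$, which has no definite monotonicity; controlling the resulting term requires an additional inductive property of $V_n$ (sub/supermodularity in $(\Delta_1,\Delta_2)$ jointly) that you have not included in your hypothesis and that is not obviously preserved by the minimization.

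The paper closes this gap by never touching the joint distribution. It verifies the four sufficient conditions of Puterman's theorem on the \emph{primitives}: monotone reward, first-order stochastic dominance, subadditive reward, and subadditivity of the tail transition probability $q(k|s,a)=\sum_{j\geq k}p(j|s,a)$ under the ordering of states by $\Delta_2$ (resp.\ $\Delta_1$). Because each client's age either resets to $1$ or increments by $1$, this tail probability collapses to a two-level step function whose only action-dependent level is the single marginal outage probability $P_2(a)$ (resp.\ $P_1(a)$); its subadditivity then reduces precisely to the monotonicity of $P_2^N(a)$ (decreasing) and $P_1^N(a)$ (increasing) over the reduced action space of Section~III.B---the very fact you correctly identified at the end of your proposal but applied to the wrong object. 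If you want to complete your version, you should replace the induction on submodularity of $Q_n$ with this marginal tail-probability computation (a summation by parts against the monotone $V_n$), or simply cite the Puterman result once its hypotheses are checked.
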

\begin{proof}
	See Appendix \ref{A2}.
\end{proof}	

Given the structure of the optimal policy, only the decision switching boundary is needed for implementation, rather than storing each state-action pair in the optimal policy, which significantly reduces the memory for the hardware. In addition, based on the structure, a special algorithm can be developed accordingly as in \cite[Althorithm 1]{wang2018skip} to reduce the complexity in calculating the optimal policy. 

\subsection{Near-optimal Policy}
In this subsection, we propose a near-optimal policy with lower computation complexity comparing with that of the optimal MDP policy. 
Inspired by the max-weight policy in \cite{kadota2018scheduling}, the proposed suboptimal policy makes use of the transition probability of the underlying MDP and only maximizes the weighted sum of the expected AoI drop within each time slot, i.e., the weighted sum of the expected difference between the age of current state and the possible age of next state. According to \eqref{e3}, given the current state $s=(\Delta_1,\Delta_2)$, the expected AoI drop, denoted by $\mathbb{E}[\eta(s,a)]$, can be expressed as 
	\begin{equation}\label{ew1}
	\mathbb{E}[\eta(s,a)]=\left\{
	\begin{array}{rccl}
&	w_1(1-P_1^O)\Delta_1-1,& &\text{if }  a=1;\\
&	w_2(1-P_2^O)\Delta_2-1,& &\text{if } a=L\\
&	w_1(1-P_1^N(a))\Delta_1+w_2(1-P_2^N(a))\Delta_2-1,& &\text{otherwise.}  \\
	\end{array}
	\right.
	\end{equation} Then, the action of state $s$ in the proposed suboptimal policy $\bar{\pi}$ can be given by
	\begin{equation}
	\label{mw}
	\bar{\pi}(s)=\arg \max_a \mathbb{E}[\eta(s,a)].
	\end{equation} 
The suboptimal policy is simple and easy to implement. Moreover,  as we show via the numerical results in Section IV, the suboptimal policy can achieve near-optimal performance. In addition, the suboptimal policy can be readily extended to continuous power scenario, i.e., in each time slot, finding the optimal power allocated to each client to maximize the weighted sum of the expected AoI drop where $P_1^N(a)$ and $P_2^N(a)$ in \eqref{ew1} are replaced by the outage probability of each client using NOMA with continuous power allocated to client $2$.  

 \section{Extension to multiple clients $N>2$}
 Recall that the BS aims to deliver status updates to all clients in a timely manner. To that end, the BS needs to carefully decide the transmission power allocated to each client as explained in Section III. However, since state-space explodes exponentially as the number of clients and the power discretization levels increase, the MDP method elaborated in Section III is no longer computationally tractable for the multi-client scenarios.
 
 In this section, we extend our near-optimal policy proposed in Section III.D to the general case with a BS delivering timely status updates to $N$ clients ($N>2$) in a slotted manner using adaptive NOMA/OMA principle. At the beginning of each time slot, the BS needs to schedule transmission to clients. That is, the BS decides to transmit to which client(s) and allocates the transmission power to them. At the end of each time slot, if client $i$ has received its packet successfully from the BS, it will send an ACK to the BS. The observation at the $i$th client in time slot $t$ is given by 
 \begin{equation}
 y_i(t)=h_i(t)\sum_{j=1}^{N} \sqrt{p_j(t)}s_j(t)+n_i(t),
 \end{equation} where $s_j$ denotes the message from BS to client $j$ and $h_i$ denotes the channel coefficient between the BS and client $i$ as in \eqref{h1}. Without loss of generality, we consider the sorted distance $c_1>c_2>...>c_N$, i.e., $\mathbb{E}[{|h_1|}^2]<\mathbb{E}[{|h_2|}^2]<...<\mathbb{E}[{|h_N|}^2]$. Variable $p_j$ is the transmission power allocated to the message intended to client $j$ which satisfies the power limit $\bar{p}$, i.e., $\sum_{i=1}^{N}p_i\leq \bar{p}$, and $n_i \sim \mathcal{CN}(0,\sigma_i^2)$ is the complex additive Gaussian noise at client $i$. For simplicity, we assume the variance of $n_i$ is identical for all clients, i.e., $\sigma_i^2=\sigma^2$, $\forall i$.
 
 Denoted by $\mathcal{N}$ the set of all clients in the system, i.e., $\mathcal{N}=\{1,2,...,N\}$. Any subset $\mathcal{K}\subseteq\mathcal{N}$ denotes the possible set of clients to be served in each time slot. According to the NOMA principle, in the subset of clients selected to be served, a client with a smaller distance is assigned with a larger decoding order index \cite{xu2016outage,xu2017optimal}. Each selected client employs the successive interference cancellation (SIC) technique to decode the messages for clients with a smaller decoding order index in the selected client set first, and to remove the inter-user interference if the decoding is correct. Denote $\lambda_i$ as the indicator that equals $1$ when client $i$ is selected to transmit, and equals $0$ otherwise. Thus, if $K$ clients are selected to be served, then $\sum_{i=1}^{N}\lambda_i=K$. Let $m(k)$ denote the original client index among the $K$ selected clients whose decoding order is $k$, i.e., $\lambda_{m(k)}=1$, $k\in\{1,2,...,K\}$, $\forall k$. $m(.)$ is a single mapping that maps the set $\{1,2,...,K\}$ to the set $\{1,2,...,N\}$ where $K\leq N$. The sequence ${\{m(k)\}}_{k=1,2,...,K}$ consists of the set of clients selected for receiving status updates. Besides, according to the decoding order of NOMA, we have $m(1)<m(2)<...<m(K)$.
 
 Given the set of clients ${\{m(k)\}}_{k=1,2,...,K}$ to be served, denote by $R_{m(i)}^{m(j)}$ the rate for client $m(j)$ to detect client $m(i)$'s message. We consider $j\geq i$, indicating $m(j)\geq m(i)$. To correctly detect client $m(i)$'s message, client $m(j)$ should first successfully remove the interference from clients in ${\{m(k)\}}_{k=1,2,...,K}$ whose decoding order index is smaller than $m(i)$. Thus, the expression of $R_{m(i)}^{m(j)}$ is given by \cite{xu2016outage,xu2017optimal,cui2016novel}
 \begin{equation}
 R_{m(i)}^{m(j)}=\log\left(1+\frac{{|h_{m(j)}|}^2p_{m(j)}}{\sum_{k=i+1}^{K}{|h_{m(k)}|}^2p_{m(k)}+\sigma^2}\right).
 \end{equation} As the BS does not have perfect knowledge of CSI, outage may occur in the considered system. We define that if client $m(j)$ cannot detect its own message or the message of client $m(i)$ with smaller decoding index $m(j) \geq m(i)$ in the selected client set, then outage occurs at client $m(j)$ \cite{cui2016novel,ding2014performance}. Assume that the BS transmits one message to each client with
 the same fixed target rate $R$, the outage probability of client $m(j)$ can be expressed as \cite{xu2016outage,xu2017optimal}
 \begin{equation}
 \label{eqo1}
 \begin{aligned}
 P_{m(j)}^o&=1-P\left(R_{m(1)}^{m(j)}\geq R,...,R_{m(j)}^{m(j)}\geq R\right)\\
 &=1-\exp\left(-d_{m(j)}^\tau \max_{k=1,2,...,j}\left\{\frac{(2^{R}-1)\sigma^2}{p_{m(k)}-(2^{R}-1)\sum_{i=k+1}^{K}p_{m(i)}}\right\}\right).
 \end{aligned}
 \end{equation} We can see from \eqref{eqo1} that if $p_{m(k)}-(2^{R}-1)\sum_{i=k+1}^{N}p_{m(i)}\leq 0$, the outage probability of client $m(j)$ will always be $1$. Thus, for any client $m(k)$ selected to be served, i.e., $p_m(k)\neq 0$, the following condition needs to be satisfies 
 \begin{equation}
 \label{eqc1}
 p_{m(k)}> (2^{R}-1)\sum_{i=k+1}^{K}p_{m(i)}.
 \end{equation} Otherwise, an outage always occurs and the allocated power will be wasted. Moreover, if client $i$ is not served, i.e., $i\notin{\{m(k)\}}_{k=1,2,...,K}$ and $p_i=0$, its outage probability is $1$, otherwise, its outage probability will be smaller than $1$. Mathematically, we have
 \begin{equation}\label{outage}
\mathbb{E}[v_i(t)=1]=
\left\{
\begin{array}{rcl}
0,& & i\notin{\{m(k)\}}_{k=1,2,...,K},\\
1-P_{i}^o,& &i\in{\{m(k)\}}_{k=1,2,...,K}. \\
\end{array}
\right.
 \end{equation} Recall the $v_i(t)$ is the indicator that equals $1$ when client $i$ successfully receives its status update from the BS in time slot $t$. Let $\mathbf{p}(t)=\{p_1(t),p_2(t),...,p_N(t)\}$ denote the amount of transmission power allocated to each client satisfying $\sum_{i=1}^{N}p_i(t)\leq \bar{p}$. Give ${\{m(k)\}}_{k=1,2,...,K}$, we have $\sum_{i=1}^{K}p_{m(i)}(t)\leq \bar{p}$ and $p_i(t)=0$, $\forall i\notin {\{m(k)\}}_{k=1,2,...,K}$.
 
 Note that the special case $K=1$ indicates only one client will be served, i.e., client $m(1)$ will be served using OMA scheme. The corresponding outage probability becomes $1-\exp\left(-d_{m(1)}^\tau\frac{(2^{R}-1)\sigma^2}{p_{m(1)}}\right)$ as in \eqref{oma}.
 We now extend our near-optimal policy (i.e., problem in \eqref{mw}) to the multiple-client scenario by formulating the following power allocation problem.
 \begin{prob}
 	\label{p1*}
 	\begin{equation}
 	\begin{aligned}
 	&	\max_{\mathbf{p}(t)} \sum_{i=1}^{N}\left(1-P_i^o(\mathbf{p}(t))\right)w_i\Delta_i(t)\\
 	&\text{\rm s.t.}, \ \eqref{eqc1}, \ \sum_{i=1}^{N}p_i(t)\leq \bar{p},  \ p_i(t)\geq 0.
 	\end{aligned}
 	\end{equation}
 \end{prob} We note that in the above optimization problem, the instantaneous AoI of all clients in time slot $t$ will affect the power allocated to each client. Clients with smaller AoI are less likely to be served as the resultant AoI drop is insignificant.
 \subsection{Effective power allocation}
 In this subsection, we solve Problem \ref{p1*} to obtain the effective power allocation to minimize the weighted sum of expected AoI in two steps: 1) Step 1: design an optimal power allocation scheme to serve a fixed number of clients. That is, given $K$, find optimal ${\{m(k)\}}_{k=1,2,...,K}$ and $\left(p_{m(1)},p_{m(2)},...,p_{m(K)}\right)$; 2) Step 2: choose optimal $K\in\{1,2,..., N\}$ that achieves the maximum objective value. The detailed description of these two steps is given in the following.
 \subsubsection{\textbf{Step 1: Optimal power allocation to conduct transmission to fixed $K$ number of clients}}
 Given $K$, i.e., the number of clients to serve, the BS should decide which group of clients to serve, i.e., ${\{m(k)\}}_{k=1,2,...,K}$, and the power allocated to them, i.e., $\left(p_{m(1)}, p_{m(2)},..., p_{m(K)}\right)$. Recall that the power allocated to the unselected clients is $0$.
 
 As in \cite[Eq.(15)]{xu2016outage}, we convert the power constraint described in \eqref{eqc1} to the following format to facilitate the use of power constraint,
 \begin{equation}
 \label{eqc3}
 \sum_{k=1}^{K}\hat{p}_{m(k)}{(r+1)}^{(k-1)}\leq \bar{p},
 \end{equation} where $r=2^{R}-1$ and 
 \begin{equation}\label{phat}
\hat{p}_{m(k)}=p_{m(k)}-r\sum_{i=k+1}^{K}p_{m(i)}.
 \end{equation}
 The outage probability of the selected client $m(k)$ can be expressed as
 \begin{equation}
 \label{eqo2}
 \begin{aligned}
 P_{m(k)}^o&=1-P\left(R_{m(1)}^{m(k)}\geq R,...,R_{m(k)}^{m(k)}\geq R\right)\\
 &=1-\exp\left(-d_{m(k)}^\tau r\sigma^2 \max_{t=1,2,...,k}\left\{\frac{1}{\hat{p}_{m(t)}}\right\}\right), \ k\in\{1,2,...,K\}.
 \end{aligned}
 \end{equation} For other unselected nodes, their outage probability is always equal to $1$. Recall that $c_1>c_2>...>c_N$, indicating $d_1^\tau>d_2^\tau>...>d_N^\tau$. We thus have $d_{m(1)}^\tau>d_{m(2)}^\tau>...>d_{m(K)}^\tau$. Note that only the selected clients may have AoI drop and the AoI of unselected clients will increase by one, and therefore the one-step weighted sum of expected AoI drop of the network is actually that of those selected clients. Hence, for a given $K$, Problem \ref{p1*} can be re-written as
 \begin{prob}
 	\label{p2*}
 	\begin{equation}
 	\begin{aligned}
 	&	\max_{\mathbf{p}(t)} \sum_{k=1}^{K}\left(1-P_{m(k)}^o(\mathbf{p}(t))\right)w_{m(k)}\Delta_{m(k)}(t)\\
 	\text{\rm s.t.,} &\ \eqref{eqc1}, \ \sum_{k=1}^{K}p_{m(k)}(t)\leq \bar{p},  \ p_i(t)= 0, \ \forall i \notin \{m(k)\}_{k=\{1,2,...,K\}}.
 	\end{aligned}
 	\end{equation}
 \end{prob} To further simplify the above problem, the variable transformation according to \eqref{eqc3} is applied, and Problem in \ref{p2*} can be transformed into the following equivalent form:
 \begin{prob}
 	\label{p2**}
 	\begin{equation}
 	\begin{aligned}
 	&	\max_{\mathbf{\hat{p}}(t),{\{m(k)\}}} \sum_{k=1}^{K}\left(1-P_{m(k)}^o(\mathbf{\hat{p}}(t))\right)w_{m(k)}\Delta_{m(k)}(t)\\
 	\text{\rm s.t.,} &\ \eqref{eqc3}, \ \mathbf{\hat{p}}(t)=(\hat{p}_{m(1)},\hat{p}_{m(2)},...,\hat{p}_{m(K)}), \ \hat{p}_{m(k)}> 0, \ \forall k\in\{1,2,...,K\}. 
 	\end{aligned}
 	\end{equation}
 \end{prob} This problem consists of two parts: 1) select which $K$ clients to serve, i.e., $\{m(k)\}$; 2) transferred power variable of these $K$ clients, i.e., $\mathbf{\hat{p}}(t)$, given  $\mathbf{\Delta}(t)=\{\Delta_1(t),\Delta_2(t),..., \Delta_N(t)\}$, $d_1^\tau>d_2^\tau>...>d_N^\tau$, $r$ and $\sigma^2$.
 
 
 

 
 
 Suppose ${\{m(k)\}}_{k=1,2,...K}$ is known, we then solve Problem 5 as following (note that the time index $t$ is dropped hereafter for notation simplicity):
 \begin{prob}
 	\label{p6}
 \begin{subequations}
 	\begin{align}
 \quad  & \max_{\mathbf{\hat p}} ~\sum_{k=1}^K \exp\left(-d_{m(k)}^\tau r \sigma^2 \max\left\{\frac{1}{\hat p_{m(1)}}, \dots, \frac{1}{\hat p_{m(k)}}\right\}\right) w_{m(k)}\Delta_{m(k)} \label{eq:objective}
 	\\ 
 	&\text{s.t.,}  \quad
 	\sum_{k=1}^{K} (r+1)^{k-1} \hat p_{m(k)} \leq \overline p, \label{eq:constraint:power}
 	\\
 	&\quad \quad 
 	\hat p_{m(k)} >0, \quad k =1,\dots,K. \label{eq:constraint:positive}
 	\end{align}
 \end{subequations}
\end{prob}
In solving Problem \ref{p6}, we first have the following lemma.
 \begin{lemma}\label{lema1}
 	Adding the following constraint:
 	\begin{eqnarray}
 	\hat p_{m(1)} \geq \hat p_{m(2)} \geq \dots \geq \hat p_{m(K)} \nonumber
 	\end{eqnarray}
 	to Problem 6 will not change its optimal objective value of \eqref{eq:objective}
 \end{lemma}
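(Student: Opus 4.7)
The plan is a rearrangement argument. Given any feasible $\hat{\mathbf p}^*$ for Problem \ref{p6}, I sort its coordinates into decreasing order to obtain a new tuple $(q_1, q_2, \dots, q_K)$ with $q_1 \geq q_2 \geq \dots \geq q_K > 0$. I then show that $(q_k)$ is feasible and yields an objective value at least as large as $\hat{\mathbf p}^*$, so in particular any optimizer can be replaced by an ordered one without loss. Since $d_{m(k)}^\tau$, $w_{m(k)}$, and $\Delta_{m(k)}$ are fixed parameters attached to positions $k$ (they are not permuted), the comparison reduces to comparing only the running quantities built from the $\hat p$-values.

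For feasibility, the coefficients $(r+1)^{k-1}$ in \eqref{eq:constraint:power} are strictly increasing in $k$. By the rearrangement inequality applied to the increasing sequence $((r+1)^{k-1})_k$ and the decreasing sequence $(q_k)_k$, the sum $\sum_{k=1}^K (r+1)^{k-1} q_k$ is the minimum over all permutations of the multiset of $\hat p^*$-values, hence no larger than $\sum_{k=1}^K (r+1)^{k-1} \hat p^*_{m(k)} \leq \overline p$. Positivity of each $q_k$ is inherited from $\hat{\mathbf p}^*$, so \eqref{eq:constraint:positive} holds.

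For the objective, write $M_k(\mathbf{\hat p}) := \max\{1/\hat p_{m(1)},\dots,1/\hat p_{m(k)}\} = 1/\min\{\hat p_{m(1)},\dots,\hat p_{m(k)}\}$; each summand in \eqref{eq:objective} is a decreasing function of $M_k$. Under the sorted tuple, $M_k(\mathbf q) = 1/q_k$ since $1/q_t$ is increasing in $t$. The key elementary observation is that the minimum of any $k$ elements of the multiset $\{q_1,\dots,q_K\}$ is at most $q_k$: if it strictly exceeded $q_k$, then all $k$ chosen elements would strictly exceed $q_k$, which is impossible because only $k-1$ values in the multiset are strictly greater than $q_k$. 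Applying this with the $k$-element subset $\{\hat p^*_{m(1)},\dots,\hat p^*_{m(k)}\}$ gives $M_k(\hat{\mathbf p}^*) \geq M_k(\mathbf q)$, so the exponent in the $k$-th summand is larger (closer to zero) under $\mathbf q$ and the summand weakly increases. Summing over $k$ and using $w_{m(k)}\Delta_{m(k)} \geq 0$ yields the desired inequality on the objective.

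The main technical obstacle is the per-term comparison, since one could worry that permuting $\hat p$-values might improve some summands while degrading others. The set-minimum observation above circumvents this by showing that \emph{every} summand weakly improves under the decreasing-sort rearrangement, because the $k$-th summand depends on the first $k$ coordinates only through their minimum, and decreasing sort simultaneously maximizes every such running minimum. Once this monotonicity is established, combining it with the rearrangement-inequality step on the power constraint completes the proof that adding $\hat p_{m(1)} \geq \dots \geq \hat p_{m(K)}$ leaves the optimal value of Problem \ref{p6} unchanged.
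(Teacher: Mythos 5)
Your proof is correct, but it takes a genuinely different route from the paper's. The paper repairs monotonicity violations \emph{in place}: whenever $\hat p_{m(k-1)} < \hat p_{m(k)}$, it lowers $\hat p_{m(k)}$ down to $\hat p_{m(k-1)}$. Feasibility of \eqref{eq:constraint:power} is then immediate (the left-hand side only decreases), and each running maximum $\max\{1/\hat p_{m(1)},\dots,1/\hat p_{m(k)}\}$ is exactly unchanged because the lowered value never drops below the current running minimum, so the objective \eqref{eq:objective} is preserved exactly. You instead \emph{sort} the coordinates into decreasing order, which forces you to supply two additional ingredients that the paper's construction avoids: the rearrangement inequality against the increasing weights $(r+1)^{k-1}$ to retain feasibility of \eqref{eq:constraint:power}, and the observation that the decreasing sort simultaneously maximizes every running minimum (your counting argument that at most $k-1$ elements strictly exceed $q_k$), so that every summand weakly improves. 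Both arguments are elementary and complete; the paper's is more local and gives equality of objective values with a one-line feasibility check, while yours preserves the multiset of $\hat p$-values and establishes the slightly stronger fact that the sorted point weakly improves the objective while weakly reducing the weighted power consumption. Either suffices for Lemma \ref{lema1}.
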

\begin{proof}
	See the Appendix \ref{A3}.
\end{proof}
 By Lemma 1, we focus on solving the following problem to the same objective value as Problem 6, which can be solved in a simple and tractable way.
 \begin{prob}
 	\label{p7}
 \begin{subequations}
 	\begin{align}
 	\quad  & \max_{\mathbf{\hat p}} ~\sum_{k=1}^K w_{m(k)}\Delta_{m(k)}  \exp\left(- \frac{d_{m(k)}^\tau r \sigma^2}{\hat p_{m(k)}} \right) \label{eq:objective-1}
 	\\ 
 	&\text{s.t.}  \quad
 	\sum_{k=1}^{K} (r+1)^{k-1} \hat p_{m(k)} \leq \overline p \label{eq:constraint:power-1}
 	\\
 	& \quad\quad \hat p_{m(1)} \geq \hat p_{m(2)} \geq \dots \geq \hat p_{m(K)}   \label{eq:constraint:order-1}
 	\\
 	&\quad \quad 
 	\hat p_{m(k)} >0, \quad k =1,\dots,K \label{eq:constraint:positive-1}
 	\end{align}
 \end{subequations}
\end{prob}
 To proceed, we first investigate the properties of the objective function \eqref{eq:objective-1} in Problem \ref{p7}. We define
 \begin{eqnarray}
 G_k(\hat p_{m(k)}):=w_{m(k)}\Delta_{m(k)} \exp\left(- \frac{d_{m(k)}^\tau r \sigma^2}{\hat p_{m(k)}} \right).  \nonumber
 \end{eqnarray}
 The following properties hold for functions $G_k(\cdot)$, $k=1,\dots,K$:
 \begin{itemize}
 	\item $\lim_{\hat p_{m(k)}\rightarrow 0^+} G_k(\hat p_{m(k)}) = 0$. For convenience, we define $G_k(0) =0$;
 	\item $\lim_{\hat p_{m(k)}\rightarrow +\infty} G_k(\hat p_{m(k)}) = w_{m(k)}\Delta_{m(k)}$;
 	\item $G_k(\cdot)$ is strictly monotonically increasing on $(0,+\infty)$, which can be verified by checking $G_k'(\cdot)$;
 	\item $G_k(\cdot)$ is strictly convex on $[0,\frac{d_{m(k)}^\tau r \sigma^2}{2})$, and strictly concave on $[\frac{d_{m(k)}^\tau r \sigma^2}{2}, +\infty)$, which can be verified by checking $G_k''(\cdot)$.
 \end{itemize}
 
 Inspired by the properties above, we propose a convex upper approximation of $G_k(\cdot)$ as follows. We find a constant $\tilde p_{m(k)} > 0$ for each $k=1,\dots,K$, and replace the segment of $G_k(\cdot)$ on $[0,\tilde p_{m(k)}]$ by the straight line segment connecting two points $(0, G_k(0)=0)$ and $(\tilde p_{m(k)}, G_k(\tilde p_{m(k)}))$. At the same time, the straight line segment is a tangent line to $G_k(\cdot)$ at the point $(\tilde p_{m(k)}, G_k(\tilde p_{m(k)}))$. Therefore $\tilde p_{m(k)}$ can be calculated as follows:
 \begin{eqnarray}
 \frac{G_k(\tilde p_{m(k)}) - G_k(0)}{ \tilde p_{m(k)} - 0} = G_k'(\tilde p_{m(k)}) \nonumber
 \end{eqnarray}
 which leads to the result $\tilde p_{m(k)} = d_{m(k)}^\tau r \sigma^2$. Hence a convex upper approximate of $G_k(\cdot)$ is:
 \begin{eqnarray}
 \tilde G_k(\hat p_{m(k)}):=
 \begin{cases}
 \frac{w_{m(k)}\Delta_{m(k)} e^{-1}}{d_{m(k)}^\tau r \sigma^2}  \hat p_{m(k)}, \quad\quad\quad  0\leq \hat p_{m(k)} < d_{m(k)}^\tau r \sigma^2
 \nonumber \\
 w_{m(k)}\Delta_{m(k)} \exp\left(- \frac{d_{m(k)}^\tau r \sigma^2}{\hat p_{m(k)}} \right), \quad \hat p_{m(k)} \geq d_{m(k)}^\tau r \sigma^2\nonumber
 \end{cases}
 \end{eqnarray} 
  \begin{figure}[!tbp]
 	\centerline{\includegraphics[width=0.35\textwidth]{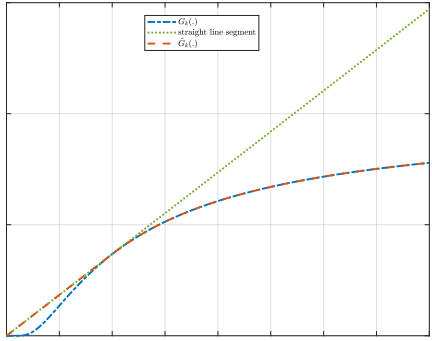}}
 	\caption{Understanding of the convex approximation.}
 	\label{fig0}
 \end{figure}
 For the sake of understanding, we illustrates an example of the adopted convex approximation in Fig. \ref{fig0}. Then we can solve the following \emph{convex} problem as an approximate of Problem 7:
 \begin{prob}
 	\label{p8}
 \begin{subequations}
 	\begin{align}
 	\quad  & \max_{\mathbf{\hat p}} ~\sum_{k=1}^K \tilde  G_k(\hat p_{m(k)}) \label{eq:objective-2}
 	\\ 
 	&\text{s.t.}  \quad
 	\sum_{k=1}^{K} (r+1)^{k-1} \hat p_{m(k)} \leq \overline p, \label{eq:constraint:power-2}
 	\\
 	& \quad\quad \hat p_{m(1)} \geq \hat p_{m(2)} \geq \dots \geq \hat p_{m(K)} ,  \label{eq:constraint:order-2}
 	\\
 	&\quad \quad 
 	\hat p_{m(k)} \geq 0, \quad k =1,\dots,K. \label{eq:constraint:positive-2}
 	\end{align}
 \end{subequations}
 \end{prob}

 Let $\mathbf{\hat p}^o = (\hat p_{m(1)}^o,\dots,\hat p_{m(K)}^o)$ be an optimal solution that we obtain by solving Problem \ref{p8}, and denote the optimal objective value of Problem 8 as $\tilde U^o =\tilde U(\mathbf{\hat p}^o) :=\sum_{k=1}^K \tilde G_k(\hat p_{m(k)}^o)$. Note that $\mathbf{\hat p}^o$ is also a \emph{feasible} solution to Problem \ref{p7}. Moreover, denote the objective value of Problem \ref{p7} at $\mathbf{\hat p}^o$ as $U^o = U(\mathbf{\hat p}^o) =\sum_{k=1}^K G_k(\hat p_{m(k)}^o)$. Then the optimal objective value of Problem \ref{p7}, denoted by $U^*$, is bounded as $U^o \leq U^* \leq \tilde U^o$.  
 The following Corollary provides an upper bound of the suboptimality gap $U^* - U^o$ for Problem \ref{p7}. 
 \begin{corollary}\label{c1}
 The gap between the optimal objective value of Problem \ref{p7} and that of Problem \ref{p8} is bounded by $e^{-2} \sum_{k=1}^K w_{m(k)}\Delta_{m(k)}$. Mathematically, $U^*-U^o\leq e^{-2} \sum_{k=1}^K w_{m(k)}\Delta_{m(k)}$.
 \end{corollary}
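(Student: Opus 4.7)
The plan is to convert the suboptimality gap into a pointwise difference between $\tilde G_k$ and $G_k$ evaluated at the approximate optimizer, and then bound each termwise gap by an elementary one-variable calculation. First, I would establish the chain $U^* - U^o \leq \tilde U^o - U^o$. Since $\tilde G_k(\hat p) \geq G_k(\hat p)$ by construction, the objective $\tilde U(\cdot)$ pointwise dominates $U(\cdot)$ on the feasible set common to Problems \ref{p7} and \ref{p8}. Thus, for any optimizer $\mathbf{\hat p}^*$ of Problem \ref{p7}, $U^* = U(\mathbf{\hat p}^*) \leq \tilde U(\mathbf{\hat p}^*) \leq \tilde U(\mathbf{\hat p}^o) = \tilde U^o$, where the last inequality uses optimality of $\mathbf{\hat p}^o$ for Problem \ref{p8}. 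Subtracting $U^o = U(\mathbf{\hat p}^o)$ from the chain gives
\[
U^* - U^o \;\leq\; \tilde U^o - U^o \;=\; \sum_{k=1}^K \bigl[\tilde G_k(\hat p_{m(k)}^o) - G_k(\hat p_{m(k)}^o)\bigr].
\]

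Next, I would reduce each summand to a scalar expression. Because $\tilde G_k$ coincides with $G_k$ on $[\tilde p_{m(k)}, +\infty)$ with $\tilde p_{m(k)} := d_{m(k)}^\tau r \sigma^2$, only the linear region $\hat p \in [0,\tilde p_{m(k)}]$ contributes. Setting $x := \hat p / \tilde p_{m(k)} \in [0,1]$, the gap rewrites as $w_{m(k)}\Delta_{m(k)} \cdot h(x)$ with $h(x) := e^{-1}x - e^{-1/x}$. It therefore suffices to show $h(x) \leq e^{-2}$ for all $x\in[0,1]$, and then sum the termwise bound over $k$.

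Finally, I would maximize $h$ on $[0,1]$. Since $h(0)=0$ (by the convention $G_k(0)=0$) and $h(1)=e^{-1}-e^{-1}=0$ (the tangency point of the approximation), any maximum is attained at an interior critical point $x^* \in (0,1)$ where $h'(x^*) = e^{-1} - e^{-1/x^*}/(x^*)^2 = 0$, i.e., $e^{-1/x^*} = e^{-1}(x^*)^2$. Substituting this identity back into $h(x^*)$ collapses the exponential term and yields
\[
h(x^*) \;=\; e^{-1}x^* - e^{-1}(x^*)^2 \;=\; e^{-1} x^*(1-x^*) \;\leq\; \tfrac{e^{-1}}{4},
\]
by the elementary inequality $x(1-x)\leq 1/4$. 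Since $e<4$, we have $e^{-1}/4 < e^{-2}$, so $h(x)\leq e^{-2}$ on $[0,1]$, which delivers the claimed bound. The one nonroutine step is this scalar maximization: a direct attack on $h'(x^*)=0$ produces the transcendental equation $1/x^* = 1 - 2\ln x^*$ with no closed form, so the key move is to reuse the critical-point identity to eliminate the exponential inside $h(x^*)$ and reduce the problem to the elementary estimate $x(1-x)\leq 1/4$; everything else is bookkeeping between the two optimization problems.
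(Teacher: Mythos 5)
Your argument is correct, and it follows the same overall decomposition as the paper: both reduce $U^*-U^o$ to $\tilde U^o - U^o \le \sum_k \sup_{\hat p}\bigl[\tilde G_k(\hat p)-G_k(\hat p)\bigr]$ and then bound the per-term gap at the interior critical point where $G_k'$ matches the slope of the linear segment. The difference is in how that scalar maximization is finished. The paper sets $\alpha = d_{m(k)}^\tau r\sigma^2/\hat p_{m(k)}'$, argues that the critical point lies in $(0, d_{m(k)}^\tau r\sigma^2/2]$ so that $\alpha\in[2,+\infty)$, and uses monotonicity of $e^{-\alpha}(\alpha-1)$ on that interval to get the per-term bound $w_{m(k)}\Delta_{m(k)}e^{-2}$. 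You instead normalize to $x=\hat p/(d_{m(k)}^\tau r\sigma^2)\in[0,1]$, use the stationarity identity $e^{-1/x^*}=e^{-1}(x^*)^2$ to eliminate the exponential, and apply $x(1-x)\le 1/4$; the two expressions agree exactly at the critical point (substituting $x^*=1/\alpha$ turns $e^{-1}x^*(1-x^*)$ into $e^{-\alpha}(\alpha-1)$), but your finishing step does not require localizing the critical point to $\alpha\ge 2$ and yields the sharper per-term constant $e^{-1}/4 < e^{-2}$, from which the stated bound follows since $e<4$. One cosmetic remark: you should note explicitly that if the maximum of $h$ on $[0,1]$ is attained at an endpoint then $h=0$ and the bound is trivial, so the interior-critical-point case is the only one that matters; with that said, the proof is complete and in fact slightly strengthens the corollary.
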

\begin{proof}
See the Appendix \ref{pc1}.
\end{proof}

We realize that it could be difficult to derive the closed-form solution to both Problem \ref{p7} and Problem \ref{p8}. However, compared to Problem 7, Problem \ref{p8} can be solved efficiently via any convex optimization solver. Besides, Corollary \ref{c1} offers the upper bound of the suboptimality gap between Problem 8 and Problem \ref{p7}.

Moreover, for a fixed total number $N$ of clients and a fixed number $K$ of clients to be served, there are in total $C_N^K$ possible sequences ${\{m(k)\}}_{k=1,2,...K}$. By traversing all these combinations, we can find the optimal solution to Problem \ref{p8} with the optimal set of $K$ clients to be served $\mathcal{K}={\{m(k)\}}_{k=1,2,...K}$. It is worth emphasizing that we traverse all these combinations by substituting them to \eqref{eq:objective-1} rather than \eqref{eq:objective-2}, and then select the one with the maximum objective value.

\subsubsection{\textbf{Step 2: Optimal number of clients to be served}} By comparing the optimal performance for every $K\in \{1,2,...,N\}$, we can find the optimal value $K^*$, and its corresponding clients to be served $\mathcal{K}^*={\{m(k)\}}_{k=1,2,...K^*}$ and $\left(\hat p_{m(1)}, \hat p_{m(1)},  \hat p_{m(3)}, \dots, \hat p_{m(K^*)}\right)$.  It is worth emphasizing that we traverse all $K\in \{1,2,...,N\}$ by substituting them to the object in Problem \ref{p1*} to find $\mathcal{K}^*={\{m(k)\}}_{k=1,2,...K^*}$ and the corresponding value $\left(\hat p_{m(1)}, \hat p_{m(1)},  \hat p_{m(3)}, \dots, \hat p_{m(K^*)}\right)$. Then, according to the relationship between $\{p_{m(k)}\}$ and $\{\hat p_{m(k)}\}$, we can transfer $\{\hat p_{m(k)}\}$ to the power allocated to each client, and obtain $\{p_{m(k)}\}$ and $p_i=0$, if $i\notin \mathcal{K}^*$.

To summarize our method, the pseudocode of the overall algorithm for resolving Problem 3 is described in Algorithm \ref{alg1}.
\begin{small}
\begin{algorithm}[!htbp]
	\caption{Calculate power allocated to each client}
	\begin{algorithmic}[1]
		\Require ~~ \\
		\textbf{Input}:$\mathbf{\Delta}(t)=\{\Delta_1(t),\Delta_2(t),..., \Delta_N(t)\}$, $(d_1,d_2,...,d_N)$, $r$, $\tau$ and $\sigma^2$.
		\For{$K=1$ to $N$}
		\State $\eta_K=0$;
		\For{$j=1$ to $C_N^K$}
		\State ${\{m(k)\}}_{k=1,2,\cdots,K}={\{m_j(k)\}}_{k=1,2,\cdots,K}$;\Comment{The subset of $\mathcal{N}$ with $K$ clients}
		\State $\{\hat{p}_{m(k)}\}$:= solution to Problem \ref{p8};\Comment{Solve Problem \ref{p8} by convex optimization tool.}
		\If{$\eta_K<\sum_{k=1}^{K}G_k(\hat{p}_{m(k)})$}
		\State $\eta_K=\sum_{k=1}^{K}G_k(\hat{p}_{m(k)})$;
		\State ${\{m_K^*(k)\}}_{k=1,2,\cdots,K}={\{m(k)\}}_{k=1,2,\cdots,K}$;
		\State $\mathbf{\hat p}_K={\{\hat{p}_{m(k)}\}}_{k=1,2,\cdots,K}$;
		\EndIf
		\EndFor
		\EndFor
		\State $K^*=\arg \underset{K=1,2,...,N}{\max} \eta_K$;
		\State $\mathcal{K}^*={\{m_{K^*}^*(k)\}}_{k=1,2,\cdots,K^*}$;\Comment{The set of served clients}
		\State convert $\mathbf{\hat p}_{K^*}$ to $\mathbf{ p}_{K^*}$ using \eqref{phat}; \Comment{Power allocated to clients in $\mathcal{K}^*$.}
	\end{algorithmic}
	\label{alg1}
\end{algorithm}
\end{small}
\section{Numerical Results and Discussions}
In this section, simulation results are provided to evaluate the effectiveness of the proposed adaptive NOMA/OMA scheme for both two-client and multi-client scenarios. 
\subsection{Two-client scenario}
This subsection provides numerical results to verify the analytical results for the two-client scenario presented in Section III. We set path loss exponent $\tau=2$ and the target data rate $R=1$ in all simulations. The SNR in this subsection refers to the transmission SNR $\rho$.

We follow \cite{sennott2009stochastic} and apply Relative Value Iteration (RVI) method on truncated finite states ($\Delta_i \leq 100$, $\forall i$) to approximate the countable infinite state space. The optimal policy and suboptimal policy is illustrated in Fig.\ref{fig1}, where SNR$=18$dB, the normalized distances for two clients are $d_1=2$ and $d_2=4$, and the weighted parameters for two clients $w_1=w_2=0.5$. We can observe the switching structure of the optimal policy which verifies Theorem \ref{T2}. Besides, we can find that the proposed suboptimal policy is similar to the optimal policy.
\begin{figure}[!tbp]
	\centering
	\begin{subfigure}[b]{0.4\textwidth}
		\centering
		\includegraphics[width=\textwidth]{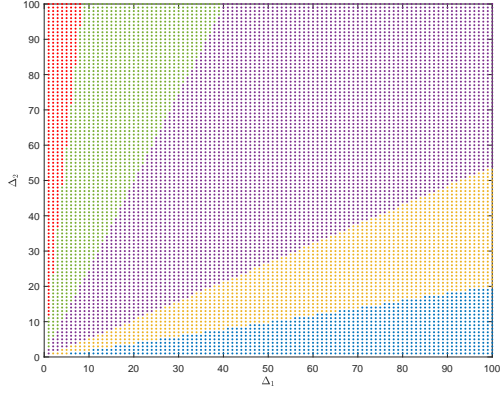}
		\caption{MDP optimal policy}
		\label{fig10}
	\end{subfigure}
	\hfill
	\begin{subfigure}[b]{0.4\textwidth}
		\centering
		\includegraphics[width=\textwidth]{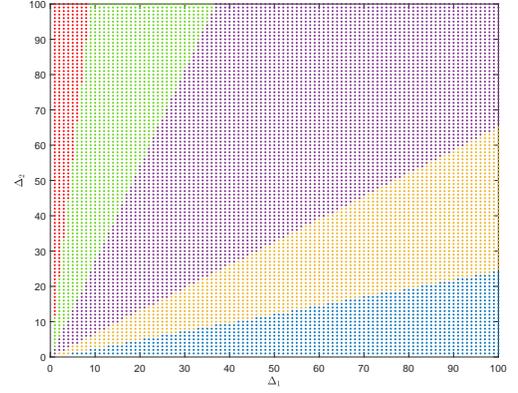}
		\caption{Suboptimal policy}
		\label{fig11}
	\end{subfigure}
	\caption{Age-optimal policy and suboptimal policy. Each point represents a state $s=(\Delta_1,\Delta_2)$. The colored area indicates action for each state, i.e., $a=0$ for states in the blue area; $a=7$ for states in the orange area; $a=8$ for states in the purple area; $a=9$ for states in the green area and $a=10$ for states in the red area, where $L=10$ and $\mathcal{A}=\{0,6,7,8,9,10\}$.}
	\label{fig1}
\end{figure}
\begin{figure}[!t]
		\centerline{\includegraphics[width=0.43\textwidth]{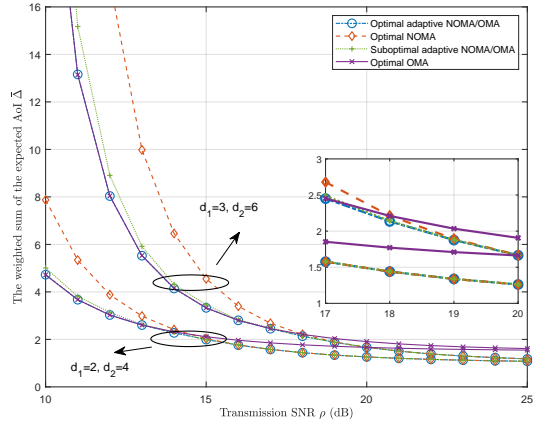}}
	\caption{The performance comparison of different policies versus SNR for the two-client scenario with $w_1=w_2=0.5$.}
	\label{fig2}
\end{figure}

Fig. \ref{fig2} compares the weighted sum of the expected AoI of the two clients under optimal policy using adaptive NOMA/OMA scheme (optimal adaptive NOMA/OMA scheme), the policy that always using NOMA for transmission (optimal NOMA policy with $a\in\{\max\{\lceil{\frac{L}{2}\rceil}+1,\lceil\frac{(2^R-1)L}{2^R}\rceil\},...,L-1\}$), the proposed suboptimal policy and the optimal OMA policy that the BS adaptively selects one client to conduct transmission (optimal OMA scheme with $a\in \{0,L\}$) in two cases: 1) $d_1=2$ and $d_2=4$; 2) $d_1=3$ and $d_2=6$. The setting of the rest system parameters is the same as that in Fig \ref{fig1}. We conduct the simulations by generating ${10}^6$ time slots for different transmission SNRs. We can see from Fig.\ref{fig2} that the proposed suboptimal policy achieves near-optimal performance: its weighted sum of the expected AoI almost coincides with that of the optimal adaptive NOMA/OMA policy especially when the outage probability of two clients are small as shown in Fig. \ref{fig2}. Specifically, the performance of suboptimal policy is closer to that of the optimal adaptive NOMA/OMA policy when $d_1=2$ and $d_2=4$, comparing to the case when $d_1=3$ and $d_2=6$; the gap between the AoI performance of the suboptimal policy and that of the optimal adaptive NOMA/OMA policy narrows as the SNR increases.

Moreover, we can see that when SNR is small, e.g., SNR$< 15$dB, the performance of optimal adaptive NOMA/OMA scheme and that of the optimal OMA scheme are almost the same in Fig. \ref{fig2}. This is due to the low SNR, which leads to a higher outage probability for both OMA and NOMA. The situation for NOMA is even worse. As such, both optimal adaptive NOMA/OMA policy and the suboptimal policy will prefer not to choose NOMA scheme but use OMA scheme. Thus, these two policies have similar performance. As SNR increases, the weighted sum of the expected AoI of optimal OMA policy will approach $1.5$, when $w_1=w_2=0.5$. This is the optimal performance under the OMA scheme. As the outage probability of each client is approaching 0, the instantaneous age of each client will equal to 1 and 2 iteratively.

Furthermore, we can see from Fig. \ref{fig2} that the performance of optimal adaptive NOMA/OMA policy and that of suboptimal policy and NOMA policy are relatively close when SNR is large, e.g., SNR$\geq20$dB. This is because both optimal adaptive NOMA/OMA policy and suboptimal policy are more likely to choose NOMA for transmission to both clients at the same time. When SNR is large enough, the optimal performance of both the optimal adaptive NOMA/OMA policy and the suboptimal policy approaches $1$ as the instantaneous AoI of each client will be always $1$, thanks to almost no outage for both clients in NOMA at high SNR. The BS thus always chooses NOMA scheme to conduct transmissions to both clients. In addition, NOMA is better than optimal OMA when SNR$>16$dB for $d_1=2$ and $d_2=4$ and SNR$>19$dB for  $d_1=3$ and $d_2=6$. This shows the benefits of NOMA in timely status update when SNR is large.

\subsection{Multi-client scenario}
In this subsection, we evaluate the effectiveness of approximation of the max-weight policy in multi-client scenario. We conduct all simulations by generating ${10}^5$ time slots for different transmission SNR $\rho=\bar{p}/{\sigma^2}$. We consider the scenario with a BS conducting timely status update to $5$ clients with normalized distance $d_i=6-i$, $i\in\{1,2,..,5\}$. We set path loss exponent $\tau=2$ and the target data rate $R=1$. Fig. \ref{fig3*} illustrates the performance of different policies under different transmission SNR, including: 1) max-weight policy under adaptive NOMA/OMA solved by exhaustive search in each time slot (MW-N/OMA), 2) approximated convex optimization policy (termed AP-N/OMA), 3) approximated convex optimization policy under NOMA with fixed client number $K$ (termed AP-NOMA-F-$K$) and 4)  OMA scheme that selects the client corresponding to achieve maximum expected age drop to serve as in \cite{kadota2019minimizing} (termed MW-OMA). 

We can see that similar to the results of the two-client scenario, when the SNR is low, the AoI performance under different NOMA schemes (i.e., AP-NOMA-F-$K$) is poor, due the relatively large outage probability of NOMA scheme in low SNR scenario, comparing with MW-OMA scheme. Specifically, when SNR $\rho\leq 13$dB, the performance of AP-NOMA-F-$K_1$ is worse than that of AP-NOMA-F-$K_2$, if $K_1>K_2$. As the transmission SNR increases, the performance of AP-NOMA-F-$K$ becomes better. The rationale is that when the transmission SNR is sufficiently large, the NOMA scheme that allows to serve more clients achieves reduced age performance. When SNR $\rho\geq 29$dB, the performance of AP-NOMA-F-$K_1$ is better than that of AP-NOMA-F-$K_2$, if $K_1>K_2$. Comparing to the AP-NOMA-F-$K$ and MW-OMA, the proposed AP-N/OMA scheme that adaptively switches between NOMA and OMA achieves overall better AoI performance as it allocates power to each client in a more flexible way. In addition, the small gap between MW-N/OMA policy and AP-N/OMA shows the effectiveness of our proposed approximation method which reduces the computation complexity but achieves near-optimal performance. 
  
\begin{figure}[!tbp]
	\centerline{\includegraphics[width=0.43\textwidth]{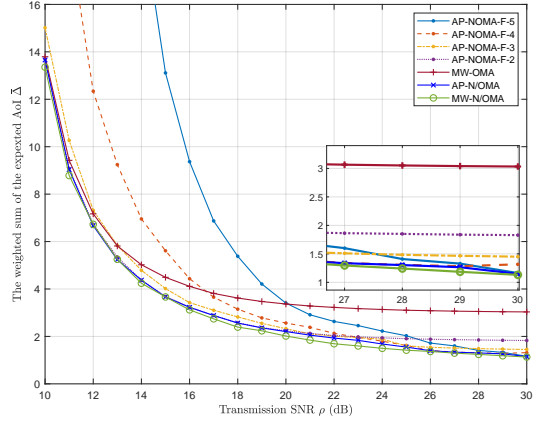}}
	\caption{The performance comparison of different policies versus SNR for multi-client scenario, $N=5$ with $w_i=1/N$, $\forall i\in\{1,2,..,5\}$.}
	\label{fig3*}
\end{figure}

Fig. \ref{fig4} plots the curves of the weighted sum of expected AoI performance for MW-OMA policy, AP-N/OMA policy and MW-N/OMA policy versus the number of clients in the network. The network with increasing number of clients is considered with $N\in\{2,3,4,5,6\}$, the normalized distance of $i$th client in the system with $N$ clients is $d_i=N+1-i$ and weighted parameter $w_i=1/N$. As shown in Fig. \ref{fig4}, the performance of AP-N/OMA scheme is close to that of MW-N/OMA. Moreover, comparing to the MW-OMA scheme, it achieves significant performance improvement. Besides, the AP-N/OMA scheme has a slow speed of AoI increase due to the increasing number of clients in the network, comparing with MW-OMA scheme. The performance gap between MW-OMA and AP-N/OMA and that between MW-OMA and MW-N/OMA, both increase as the number of clients in the network increases. This shows the potential of adaptive NOMA/OMA scheme in achieving reduced AoI performance for multi-client network. The rationale behind is that in MW-OMA scheme, only one client can be served to have potential AoI drop while other clients' AoI will certainly increase. The increasing number of clients in the network makes more clients have AoI increase. Thus, the age of network will increase. While for adaptive NOMA/OMA, as more than one client can be served at each time slot, the speed of AoI increase due to the increasing number of clients in the network will slow down.

\begin{figure}[!tbp]
	\centerline{\includegraphics[width=0.4\textwidth]{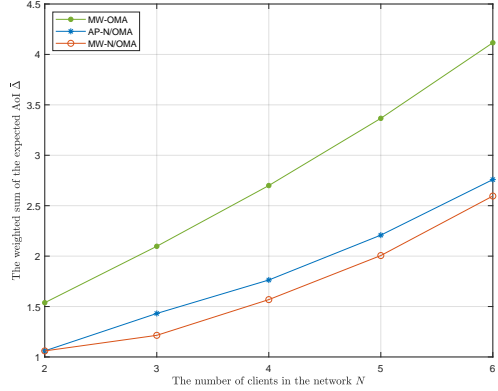}}
	\caption{Simulation of network with different number of clients $N$ with $w_i=1/N$, $\forall i$ and transmission SNR $\rho=20$ dB.}
	\label{fig4}
\end{figure}
\section{Conclusions}
In this paper, we considered a wireless network with a base station (BS) conducting timely transmission to multiple clients in a time-slotted manner. The BS can adaptively switch between NOMA and OMA for the downlink transmission to minimize the AoI of the network. We studied both two-client scenario and multi-client scenario. For the two-client scenario, we developed an optimal policy for the BS to decide whether to use NOMA or OMA for downlink transmission based on the instantaneous AoI of both clients in order to minimize the weighted sum of the expected AoI of the network. This was achieved by formulating and resolving a Markov Decision Process (MDP) problem. We proved the existence of an optimal stationary and deterministic policy. Action elimination was conducted to reduce the computation complexity. The optimal policy is shown to have a switching-type property with obvious decision boundaries. A suboptimal policy with lower computation complexity was also proposed, which is shown to achieve near-optimal performance according to simulation results. 

For the multi-client scenario, inspired by the proposed near-optimal policy, we formulated a nonlinear optimization problem to determine the optimal power allocated to each client by maximizing the expected AoI drop of the network in each time slot. We managed to resolve the formulated problem by approximating it as a convex optimization problem. The upper bound of the gap between the approximate convex problem and the original nonlinear, nonconvex problem was derived. Simulation results validated the effectiveness of the approximation. The performance adaptive NOMA/OMA scheme by solving the convex optimization was shown to be close to that of max-weight policy solved by exhaustive search. Besides, the adaptive NOMA/OMA scheme has achieved significantly reduced AoI comparing to OMA scheme, especially when the number of clients in the network is large and the transmission SNR is high.
\begin{appendices}
	\section{Proof of Theorem \ref{TE}}
	\label{A1}
	We prove this theorem by verifying Assumptions 3.1, 3.2 and 3.3 in \cite{guo2006average} hold. As the action space for each state is finite, Assumption 3.2 holds, and we only need to verify the following two conditions. 
	\begin{itemize}
		\item[\text{1)}] There exist positive constants $\beta < 1$, $M$ and $m$, and a measurable function $\omega(s) \geq1$, $s=(\Delta_1,\Delta_2)\in \mathcal{S}$ such that the reward function of MDP problem $r(s,a)=w_1\Delta_1+w_2\Delta_2$, $|r(s,a)|\leq M\omega(s)$ for all state-action pairs $(s,a)$ and
			\begin{equation}
			\sum_{\hat{s}\in S} \omega(\hat{s})P(\hat{s}|s,a)\leq\beta\omega(s)+m,\ {\rm \ for \ all}\ (s,a).
			\end{equation}
		\item [\text{2)}] There exist two value functions $v_1,v_2 \in B_{\omega}(\mathcal{S})$, and some state $s_0\in \mathcal{S}$, such that 
			\begin{equation}
			v_1(s)\leq h_{\alpha}(s)\leq v_2(s), \ {\rm for \ all} \ s\in \mathcal{S},\ {\rm and} \ \alpha \in(0,1),
			\end{equation} where $h_{\alpha}(s)=V_{\alpha}(s)-V_{\alpha}(s_0)$ and $B_{\omega}(\mathcal{S}):=\{u:\Vert u\Vert_{\omega} <\infty \}$ denotes Banach space, $\Vert u\Vert_{\omega}:=\sup_{s\in S}\omega(s)^{-1}|u(s)|$ denotes the weighted supremum norm.
	\end{itemize}

For condition 1, we show that when $\omega(s)=w_1\Delta_1+w_2\Delta_2$ and $m >1 $, there exists a $\beta$ that  $\underset{a}{\max}\{\frac{w_1\Delta_1 P_1^O+w_2\Delta_2+1-m}{w_1\Delta_1+w_2\Delta_2},\frac{w_1\Delta_1+w_2P_2^O\Delta_2+1-m}{w_1\Delta_1+w_2\Delta_2},$ $\;\frac{w_1P_1^N(a)\Delta_1+w_2P_2^N(a)\Delta_2+1-m}{w_1\Delta_1+w_2\Delta_2}\}\leq \beta<1$ to meet condition 1. To prove condition 2 in our problem, we show that when $\omega(s)=w_1\Delta_1+w_2\Delta_2$, there exists $\frac{w_1\Delta_1+w_2\Delta_2+1}{w_1\Delta_1+w_2\Delta_2}\leq \kappa<\infty$ that $\sum_{\hat{s}\in S} \omega(\hat{s})P(\hat{s}|s,a)\leq\kappa\omega(s) $ for all $(s,a)$, and for $d\in D^{MD}$, where $D^{MD}$ denotes the set of Markovian
and deterministic decision rule, $ \sum_{\hat{s}\in S} \omega(\hat{s})P_d(\hat{s}|s,a)\leq \omega(s)+1\leq (1+1)\omega(s)$, so that  $\alpha^T\sum_{\hat{s}\in S} \omega(\hat{s})P_d^T(\hat{s}|s,a)\leq \alpha^T (\omega(s)+T)<\alpha^T(1+T)\omega(s)$. Hence, for each $\alpha$, $0\leq \alpha <1$, there exists a $\eta$, $0\leq \eta<1$ and an integer $T$ such that 
	\begin{equation}
	\alpha^T\sum_{\hat{s}\in S} \omega(\hat{s})P_{\pi}^T(\hat{s}|s,a)\leq \eta \omega(s)
	\end{equation} for $\pi=(d_1,...,d_T)$, where $d_k\in D^{MD}$, $1\leq k\leq T$. Then, according to Proposition 6.10.1 \cite{puterman2014markov}, for each $\pi\in \Pi^{MD}$, where $\Pi^{MD}$ denotes the set of Markovian deterministic policies, and $s\in S$
		\begin{equation}
		|V_{\alpha}(s)|\leq \frac{1}{1-\eta}[1+\alpha\kappa+...+(\alpha\kappa)^{(T-1)}]w(s).
		\end{equation} We thus can further prove condition 2. This completes the proof.
\section{Proof of Theorem \ref{T2}}
\label{A2}
The switching-type policy is actually the same as the monotonically nondecreasing policy in $\Delta_2$ when $\Delta_1$ is fixed, and the monotonically nonincreasing policy in $\Delta_1$ when $\Delta_2$ is fixed. To prove the monotonicity of the optimal policy of the MDP problem in $\Delta_2$, we verify that the following four conditions given in \cite[Theorem~8.11.3]{puterman2014markov} hold.
\begin{itemize}
	\item [a)] The reward function $r(s,a)$ is nondecreasing in $s$ for all $a\in \mathcal{A}$;
	\item [b)] $q(k|s,a)=\sum_{j=k}^{\infty}p(j|s,a)$ is nondecreasing in $s$ for all $k\in \mathcal{S}$ and $a\in \mathcal{A}$, where $p(j|s,a)$ is the state transition probability $P(s_{t+1}=j|s_t=s,a_t=a)$, given in \eqref{e3} and \eqref{e3*};
	\item [c)] $r(s,a)$ is a subadditive function on $\mathcal{S}\times \mathcal{A}$ and
	\item [d)] $q(k|s,a)$ is a subadditive function on $\mathcal{S}\times \mathcal{A}$ for all $k\in \mathcal{S}$.
\end{itemize}
To verify these conditions, we first order the state by $\Delta_2$, i.e., $s^+\geq s^-$ if $\Delta_2^+\geq \Delta_2^-$ where $s^+=(\cdot,\Delta_2^+)$ and $s^-=(\cdot,\Delta_2^-)$. The one-step reward function of the MDP is 
\begin{equation}
\label{req1}
r(s,a)=w_1\Delta_1+w_2\Delta_2.
\end{equation} It is obvious that the condition a) is satisfied. According to the transition probabilities in \eqref{e3} and \eqref{e3*}, if the current state $s=(\Delta_1,\Delta_2)$, the next possible states are $s_1=(\cdot,\Delta_2+1)$ (including $(1,\Delta_2+1)$ and $(\Delta_1+1,\Delta_2+1)$) and $s_2=(\cdot,1)$ (including $(1,1)$ and $(\Delta_1+1,1)$).  Based on \eqref{e3} and \eqref{e3*}, we have
\begin{equation}
\label{req2}
q(k|s,a=0)=\left\{
\begin{array}{rcl}
0,& &\text{if }  k > s_1 \\
1,& &\text{otherwise.}  \\
\end{array}
\right.
\end{equation}
\begin{equation}
\label{req4}
q(k|s,a=i, 0<i<L)=\left\{
\begin{array}{rcl}
0,& &\text{if }  k > s_1\\
P_2^N(i),& &\text{if }  s_1\geq k>s_2 \\
1,& & \text{if } k\leq s_2
\end{array}
\right.
\end{equation}
\begin{equation}
\label{req3}
q(k|s,a=L)=\left\{
\begin{array}{rcl}
0,& &\text{if }  k > s_1\\
P_2^O,& &\text{if }  s_1\geq k>s_2 \\
1,& & \text{if } k\leq s_2
\end{array}
\right.
\end{equation} Thus, condition b) is immediate. 

To verify the remaining two conditions, we give the definition of subadditivity in the following
\begin{definition}
	\label{d0}
	(Subadditivity\cite{puterman2014markov}) A multivariable function $Q(\delta,a): \mathcal{S} \times \mathcal{A} \rightarrow \mathbb{R}$ is subadditive in $(\delta,a)$ , if for all $\delta^{+}\geq\delta^{-}$ and $a^{+}\geq a^{-}$,
	\begin{equation}
	\label{e8}
	\begin{aligned}
	Q(\delta^{+},a^{+})+ Q(\delta^{-},a^{-}) \leq Q(\delta^{+},a^{-})+ Q(\delta^{-},a^{+})
	\end{aligned}
	\end{equation}holds.
\end{definition} 
According to \eqref{req1}, condition c) follows. For the last condition, we verify whether
\begin{equation}
q(k|s^+,a^+)+q(k|s^-,a^-)\leq q(k|s^+,a^-)+q(k|s^-,a^+),
\end{equation} with $s^+=(\Delta_1,\Delta_2^+)$ and $s^-=(\Delta_1,\Delta_2^-)$ where $\Delta_2^+\geq \Delta_2^-$ and $a^+\geq a^-$. As there are three actions, we consider three cases: (1) $a^+=i$, $a^-=0$, (2) $a^+=N$, $a^-=0$ and (3) $a^+=N$, $a^-=i$ and (4) $a^+=i$, $a^-=j$ for  $ 0\leq j\leq i \leq L$, $\forall i,j$.
According to \eqref{req2}-\eqref{req3}, we can verify that condition d) holds. As all these four conditions hold, the optimal policy is monotonically nondecreasing in $\Delta_2$, when $\Delta_1$ is fixed. The proof of monotonicity of the optimal policy of the MDP problem in $\Delta_1$ is similar, thus omitted for brevity. This completes the proof.
\section{Proof of Lemma \ref{lema1}}
\label{A3}
 Consider any feasible point $(\hat p_{m(1)},\dots,\hat p_{m(K)})$ of Problem 6. Suppose $\hat p_{m(1)} < \hat p_{m(2)}$.
By decreasing $\hat p_{m(2)}$ to the same value as $\hat p_{m(1)}$, we construct another point:
\begin{eqnarray}
\hat p' = (\hat p_{m(1)}' , \hat p_{m(2)}', \hat p_{m(3)}' \dots,\hat p_{m(K)}' ) 
= (\hat p_{m(1)}, \hat p_{m(1)},  \hat p_{m(3)}, \dots, \hat p_{m(K)}) 
\end{eqnarray}
which is still feasible in terms of \eqref{eq:constraint:power}--\eqref{eq:constraint:positive}. Moreover, it can be verified that
\begin{eqnarray}
\max\left\{\frac{1}{\hat p_{m(1)}'}, \dots, \frac{1}{\hat p_{m(k)}'} \right\}= \max\left\{\frac{1}{\hat p_{m(1)}}, \dots, \frac{1}{\hat p_{m(k)}}\right\} 
\end{eqnarray}
for all $k=1,...,K$. Hence the optimal objective value \eqref{eq:objective} will not change if we add constraint $\hat p_{m(2)} \leq \hat p_{m(1)}$ to Problem 6. The same argument implies that adding $\hat p_{m(k)} \leq \hat p_{m(k-1)}$ for all $k=2,\dots,K$ to Problem 6 will not change its optimal objective value. This completes the proof.
\section{Proof of Corollary \ref{c1}}
\label{pc1}
	 For each $k=1,\dots,K$, there is a unique point $\hat p_{m(k)}' \in (0, \frac{d_{m(k)}^\tau r \sigma^2}{2}]$, such that
\begin{eqnarray}
G'_k(\hat p_{m(k)}') =w_{m(k)}\Delta_{m(k)} \exp\left(- \frac{d_{m(k)}^\tau r \sigma^2}{\hat p_{m(k)}'} \right) \times \frac{d_{m(k)}^\tau r \sigma^2}{(\hat p_{m(k)}')^2 } = \frac{ w_{m(k)}\Delta_{m(k)} e^{-1} }{d_{m(k)}^\tau r \sigma^2}. \label{eq:max_distance_condition}
\end{eqnarray} 
The difference $\tilde G_k(\hat p_{m(k)}) - G_k(\hat p_{m(k)})$ is maximized at $\hat p_{m(k)} = \hat p_{m(k)}'$, which is:
\begin{eqnarray}
&&\tilde G_k(\hat p_{m(k)}') - G_k(\hat p_{m(k)}') = \frac{ w_{m(k)}\Delta_{m(k)} e^{-1} }{d_{m(k)}^\tau r \sigma^2} \hat p_{m(k)}' - w_{m(k)}\Delta_{m(k)} \exp\left(- \frac{d_{m(k)}^\tau r \sigma^2}{\hat p_{m(k)}'} \right)
\nonumber
\\
&& =w_{m(k)} \Delta_{m(k)} \exp\left(- \frac{d_{m(k)}^\tau r \sigma^2}{\hat p_{m(k)}'} \right) \left(\frac{d_{m(k)}^\tau r \sigma^2}{\hat p_{m(k)}'} - 1\right)  =w_{m(k)} \Delta_{m(k)} e^{-\alpha} \left(\alpha - 1\right) \nonumber
\end{eqnarray}
where $\alpha := \frac{d_{m(k)}^\tau r \sigma^2}{\hat p_{m(k)}'} \in [2, +\infty)$, and the second equality utilized \eqref{eq:max_distance_condition}. It is easy to verify that $e^{-\alpha} \left(\alpha - 1\right)$ is monotonically decreasing on $\alpha \in [2, +\infty)$, and therefore its upper bound is attained at $\alpha = 2$, i.e.,
\begin{eqnarray}
\tilde G_k(\hat p_{m(k)}') - G_k(\hat p_{m(k)}') \leq w_{m(k)}\Delta_{m(k)} e^{-2}. \nonumber
\end{eqnarray}
Therefore, we have
\begin{eqnarray}\label{bound}
\tilde U^o - U^o \leq \sum_{k=1}^K \left[\tilde G_k(\hat p_{m(k)}') - G_k(\hat p_{m(k)}')\right] \leq e^{-2} \sum_{k=1}^K w_{m(k)}\Delta_{m(k)}. 
\end{eqnarray} The upper bound of $\tilde U^o - U^o$, which is also an upper bound of the gap $U^* - U^o$ for Problem \ref{p7} as $U^o \leq U^* \leq \tilde U^o$.  This completes the proof.
\end{appendices}

\bibliography{ref}
\bibliographystyle{IEEEtran}

\end{document}